\documentclass[a4paper]{article}

\usepackage[affil-it]{authblk}

\title{The pentagram map on Grassmannians}
\author{Ra\'{u}l Felipe \thanks{\texttt{raulf@cimat.mx}}}
\affil{CIMAT\\ Guanajuato, M\'exico}
\author{Gloria Mar\'{\i}~Beffa \thanks{\texttt{maribeff@math.wisc.edu}; Corresponding author}}
\affil{Mathematics Department\\
University of Wisconsin - Madison}

\date{\today}
\usepackage{amsmath,amsfonts,amsthm}
\usepackage{graphicx,pstricks}
\usepackage{color}


\topmargin=-1cm
\textheight=23.94cm
\textwidth=13cm

\newcommand{\R}{\mathbb{R}}
\newcommand{\CC}{\mathbb{C}}

\newcommand{\RP}{\mathbb{RP}}

\newcommand\SL{\mathrm{SL}}
\newcommand\Gr{\mathrm{Gr}}
\newcommand\GL{\mathrm{GL}}
\newcommand\ha{{b}}
\newcommand\hd{{q}}
\newcommand\hha{{\widehat{b}}}
\newcommand\Pm{\mathcal{P}}

\newcommand\rb{\mathbf{r}}
\newcommand\la{\lambda}
\newcommand\al{\alpha}
\newcommand\hG{\widehat{G}}
\newcommand\hg{\widehat{g}}
\newcommand\Tc{\mathcal{T}}
\newcommand\Tb{\overline{T}}
\newcommand\pb{\mathbf{p}}
\newcommand\ab{\mathbf{a}}

\newtheorem{theorem}{\textbf{Theorem}}

\newtheorem{example}{\textbf{Example}}
\newtheorem{definition}[theorem]{\textbf{Definition}}

\newtheorem{lemma}[theorem]{\textbf{Lemma}}

\newtheorem{remark}[theorem]{Remark}

\begin{document}

\maketitle



\begin{abstract} In this paper we define a generalization of the pentagram map to a map on twisted polygons in the Grassmannian space $\Gr(n,mn)$. We define invariants of Grassmannian twisted polygons under the natural action of $\SL(nm)$, invariants that define coordinates in the moduli space of twisted polygons. We then prove that when written in terms of the moduli space coordinates, the pentagram map is preserved by a  certain scaling. The scaling is then used to construct a Lax representation for the map that can be used for integration.
\end{abstract}

\section{Introduction}
In the last five years there has been a lot of activity around the study of the pentagram map, its generalizations and some related maps. The map was originally defined by Richard Schwartz over two decades ago (\cite{Sch1}) and after a dormant period it came back with the publication of \cite{OST}, where the authors proved that the map, when defined on {\it twisted} polygons, was completely integrable. The literature on the subject is quite sizable by now, as different authors proved that the original map on closed polygons was also completely integrable (\cite{OST2}, \cite{Sol}); worked on generalizations to polygons in higher dimensions and their integrability (\cite{KhS1}, \cite{KhS2}, \cite{MB1}, \cite{MB2}); and studied the integrability of other related maps (\cite{Sch2}, \cite{MB3}). The subject has also branched into geometry and  combinatorics, this bibliography refers only to some geometric generalizations of the map and is by no means exhaustive.

The success of the map is perhaps due to its simplicity. The original map is defined on closed convex polygons in $\RP^2$. The map takes a convex polygon in the projective plane to the one formed by the intersection of the lines that join every other vertex, as in the figure. The mathematical consequences of such a simple construction are astonishing (in particular, the pentagram map is a double discretization of the Boussinesq equation, a well-known completely integrable system modeling waves, see \cite{OST}). Integrability is studied not for the map itself, but for the map induced by it on the moduli space of planar projective polygons, that is on the space of equivalence classes of polygons up to a projective transformation. In \cite{OST} the authors defined it on {\it twisted} polygons, or polygons  with a monodromy after a period $N$, and proved that the map induced on the moduli space is completely integrable. (The map is equivariant under projective transformations, thus the existence of the moduli induced map is guaranteed.)

In this paper we look at the generalization of the map from the Grassmannian point of view. If we think of $\RP^2$ as the Grassmaniann $\Gr(1,3)$, that is, the space of homogeneous lines in $\R^3$, then the polygon would be a polygon in the Grassmannian under the usual action of $\SL(2+1)$, with each side representing a homogeneous plane as in the picture. The case of $\RP^{m-1}$ was studied in \cite{KhS1} where the authors proved that the generalized pentagram map was integrable for low dimensions, and conjectured that a scaling existed for the map that ensured the existence of a Lax pair and its integrability. The conjecture was proved in \cite{MB2}. We can also consider this case as $\Gr(1,m)$, $m\ge 3$. From this point of view, it is natural to investigate the generalized map defined on polygons in $\Gr(n, mn)$, where $m$ and $n$ are positive integers, $m\ge 3$. In this paper we define and study the generalization of the pentagram map to twisted Grassmannian polygons in $\Gr(n,mn)$, $m\ge 3$.

\vskip .2in
\centerline{\includegraphics[height=2.5in]{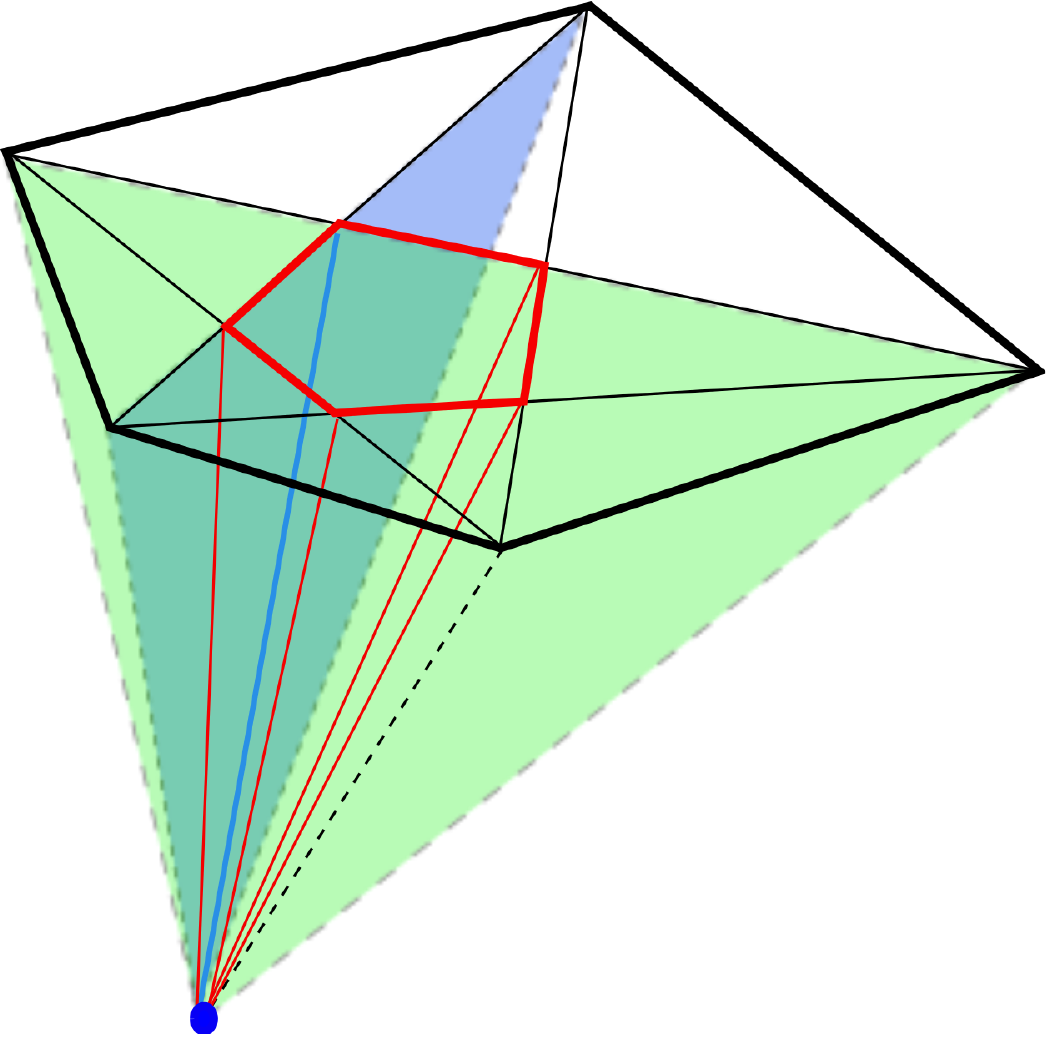}}
\vskip 2ex
\centerline{Figure 1: the pentagram map on pentagons in $\Gr(1,3)$}
\vskip .2in

The first step is to define the map on the moduli space of Grassmannian twisted polygons, that is, on the space of equivalence classes of Grassmannian twisted polygons, under the  classical action of $\SL((m-1)n+n)$ that generalizes the projective action of $PSL(m)$ on $\RP^{m-1}$. We do that by carefully studying the moduli space and finding generic coordinates that can be used to write the map in a convenient way (as in the case of the original pentagram map, the map can only be defined generically). The coordinates are found with the use of a discrete moving frame constructed through a normalization process similar to the one described in \cite{MMW}. The classification of invariants under this action  is, as far as we know, unknown, and it is completed in section 3.

In section 4 we study the case $m = 2s$.  In a parallel fashion to the study in \cite{MB2}, we proceed to write the pentagram map on the moduli space in the chosen coordinates, and we show that it can be written as the solution of a linear system of equations. We use that description and Cramer's rule to prove that the map is invariant under a certain scaling. As it was the case in \cite{MB2}, a critical part of the study is a fundamental lemma that decomposes the coefficient matrix of the system into terms that are homogeneous with respect to the scaling. This is lemma 4 for the even dimensional case, and lemma 9 for the odd dimensional one. The proofs of the rest of the results are supported by those two lemmas. Once the invariance under scaling is proved, the construction of a Lax representation is immediate when we introduce the scaling into a natural parameter-free Lax representation that exists for any map induced on the  moduli space by a map on polygons.

In section 5 we prove the case $m=2s+1$.

\section{Definitions and notations}

Let $\Gr(p,q)$ be the set of all $p$-dimensional subspaces of $V=\mathbb{R}^{q}$ or $V=\mathbb{C}^{q}$. Each  $l\in \Gr(p,q)$
can be represented by a matrix $X_{l}$ of size $q\times p$ such that the columns form a
 basis for $l$. We denote this relation by $l=<X_{l}>$. Clearly $l=<X_{l}>=<X_{l}d>$ for any $d\in \GL(p)$, and the
representation is not unique. Hence, $\Gr(p,q)$ can be viewed as the space of equivalence classes of $q\times p$ matrices, where two matrices are equivalent if their columns generate the same subspace. An element of this class, $X_{l}$ is called
{\it a lift of $l$}. The name reflects $\Gr(p,q)$ admitting the structure of a homogeneous space of dimension $p(q-p)$. Indeed, consider the Lie group $\SL(q+p)$, represented by block matrices of the form
\[
\begin{pmatrix} A_{q-p\times q-p} & B_{q-p\times p}\\ C_{p\times q-p} & E_{p\times p}\end{pmatrix}.
\]
 Let $H$ be the subspace defined by $B_{q-p\times p} = 0$. One can show that $\SL((q-p)+p)/H$ is isomorphic to $\Gr(p,q)$ and the natural action of $\SL((q-p)+p)$ on $\Gr(p,q)$ is  given by
 \[
 g\cdot <X> = < gX>.
 \]

Consider $\Gr(n,mn)$ for any positive integers $n,m$, and let  $\SL((m-1)n+n)\times \Gr(n,mn)\longrightarrow \Gr(n,mn)$ be the natural action of the group $\SL(mn)$ on $\Gr(n,mn)$.

A {\it twisted $N$-gon in $\Gr(n,mn)$} is a map $\phi:\mathbb{Z}\longrightarrow \Gr(n,mn)$ such that  $\phi(k+N)=M\cdot \phi(k)$ for all $k\in \mathbb{Z}$ and for some $M\in \SL(mn)$. The matrix $M$ is called the {\it monodromy} of the polygon and $N$ is the {\it period}.
We will also denote an $N$-gon by $\wp=(l_{k})$, where $l_{k}=\phi(k)$.

Let $X=(X_{k})$ be an arbitrary lift for an $N$-gon $\wp=(l_k)$ with monodromy $M$, and choose $X$ so it is also twisted, that is,  $X_{N+k} = M X_k$ for all $k$. For any discrete closed $N$-polygon $d=(d_{k})$ in $\GL(n)$ (i.e. satisfying $d_{k+N}=d_{k}$),
we have that $Xd=(X_{k}d_{k})$ is also a lift for the same polygon, with the same monodromy $M$. 

Let us denote by $\Pm_N$ the moduli space of twisted $N$-gons in $\Gr(n,mn)$, that is, the space of equivalence classes of twisted polygons under the natural action of $SL(mn)$. We will also denote by $\Pm l_N$ the moduli space of $N$-gons in $\R^{mn\times n}$ (or $\CC^{mn\times n}$, wherever the lifts live), under the linear action of $\SL(mn)$.


A $N$-gon $\wp = (l_k)$ is called \textbf{regular} if the matrix $\rho_k = (X_{k}\;X_{k+1}\ldots X_{k+m-2}\;X_{k+m-1})$  satisfies the
following condition
\begin{equation}\label{1}
\det \rho_k = |(X_{k}\;X_{k+1}\ldots X_{k+m-2}\;X_{k+m-1})|\neq 0,
\end{equation}
for any $k\in \mathbb{Z}$ and any lift $X$ (clearly, it suffices to check the condition for one particular lift). In other words, the columns of the matrix constitute a basis of $\mathbb{R}^{mn}$ (or
$\mathbb{C}^{mn}$) for all $k\in\mathbb{Z}$.


\section{The moduli space of twisted polygons in $\Gr(n, mn)$}

In this section we will prove that the moduli space of regular twisted polygons, $\Pm_N$, is a $N(m-1)n$-dimensional manifold and will define local coordinates.

 Assume $\wp = (l_k)$, $l_k\in \Gr(n, mn)$ is a regular twisted  $N$-gon and let $\{X_k\}$ be any twisted lift. By dimension counting, and given that $\wp$ is regular, for any $k=0,\dots,N-1$ we can find $n\times n$ matrices $a_k^i$, $i=0, \dots, m-1$ such that
 \begin{equation}\label{invariants}
 X_{k+m} = X_{m+k-1}a_k^{m-1} + \dots + X_{k+1} a_k^1 + X_k a_k^0.
 \end{equation}
Notice that if $\rho_k$ is as in (\ref{1}), then
\begin{equation}\label{K}
\rho_{k+1} = \rho_k \begin{pmatrix} O_n & O_n&\dots& O_n& a_k^0\\ I_n & O_n&\dots&O_n & a_k^1\\ \vdots&\ddots&\ddots&\vdots& \vdots\\ O_n & \dots& I_n&O_n&a_k^{m-2}\\ O_n & \dots& O_n&I_n&a_k^{m-1}\end{pmatrix} = \rho_k Q_k,
\end{equation}
where $Q_k$ is the matrix above. Using (\ref{1}), this implies that $\det a_k^0\ne 0$, for all $k$. Notice that $\rho_N = \rho_0 Q_0 Q_1, \dots, Q_{N-1}$. Thus, if $\rho_0 = I$, the monodromy is given by $M = Q_0 \dots Q_{N-1}$, and for other choices $M = \rho_0Q_0 \dots Q_{N-1}\rho_0^{-1}$. Thus, only the conjugation class of monodromy of the system defined by the matrices $Q_k$, $k=0,\dots,N-1$, is well-defined, not the monodromy itself.

 \begin{theorem}\label{invariantsth} Assume $m$ and $N$ are coprime. Then, for any regular twisted $N$-gon, $\wp$, there exists a lift $V = (V_k)$ such that
 \begin{equation}\label{norm1}
 \det(V_k, V_{k+1},\dots, V_{k+m-1}) = 1,
 \end{equation}
 for any $k=0,\dots, N-1$, and such that if $a_k^i$ are given as in (\ref{invariants}), then
 \begin{enumerate}
 \item\[
 a_k^0 = \mathrm{diagonal}(r_k^1, \dots, r_k^s),
 \]
 where each $r_k^s$ is an upper triangular Toeplitz matrix with $\det a_k^0 = 1$. 
 \item  We can choose $V$ such that all $a_k^{m-1}$'s entries, for any $k$, are generated by $N(n^2-n+1)$ independent functions.
 \end{enumerate} 
 The remaining $N(n-1)m$ entries of  $a_k^i$, $i\ne 0, m-1$, together with those above, define a coordinate system on $\Pm_N$.
 \vskip 1ex

 \begin{proof}{\rm

 Let $X = (X_k)$ be any twisted lift of the twisted polygon $\wp$. We will call $V_k = X_k d_k$ and show that we can find a closed polygon in $\GL(n)$, $\{d_k\}$, such that the conditions of the theorem are satisfied. If $V_k = X_k d_k$ we have
 \begin{equation}\label{norm2}
 (V_k, \dots, V_{k+m-1}) =  (X_k, \dots, X_{k+m-1})\mathrm{diag}(d_k, d_{k+1},\dots, d_{k+m-1}).
 \end{equation}
First of all we will show that condition (\ref{norm1}) determines the values of $\delta_k = \det d_k$, for any $k$. Indeed, from (\ref{norm2}) we have that
\[
 \prod_{i=k}^{k+m-1} \delta_i = Z_k,
\]
where $Z_k = \det(X_k, \dots, X_{k+m-1})^{-1}$ is determined by the choice of lift. These equations determine $\delta_k$ uniquely whenever $N$ and $m$ are coprime, as shown in \cite{MB2}. Let us call $\ha_k^i$ the invariants in (\ref{invariants}) associated to $X_k$ and $a_k^i$ those associated to $V_k$. Then, substituting in (\ref{invariants}), we have that
\begin{equation}\label{bnorm}
 a_k^i = d_{k+i}^{-1} \ha_k^i d_{k+m}.
\end{equation}
Let $\mathbf{p}=\{p_k\}$ be a closed polygon in $\GL(n)$ and define the $r$th $m$-product to be the product of every $m$ matrices starting at $p_r$ until we get to the end of the period, that is
 \[
 [p_r, \dots, p_{r+jm}]_m = p_rp_{r+m}p_{r+2m}\dots p_{r+jm},
 \]
with $r+(j+1)m \ge N$. If $N$ and $m$ are coprime, by repeatedly adding $m$ to the subindex we can reach all $N$ elements in $\{p_k\}$; that is, if $N$ and $m$ are coprime and $N = mq+s$, with $0<s<m$, then all $p_k$, $k=0,1,\dots, N-1$ appear in the product
\begin{equation}\label{pi}
\Pi_m(p_0) = [p_0, \dots, p_{N-s}]_m[p_{m-s}, \dots,]_m  \dots [,\dots ,p_{N+s-m}]_m[p_s,\dots, p_{N-m}]_m.
\end{equation}
(To see this one can picture a circle with $N$ marked points where we locate $p_j$. If we join with a segment every $m$ points, we are sure to join all points with segments before closing the polygon. If the polygon closes leaving some vertices untouched, it means that a multiple of $N$ can be divided into the union of disjoint orbits formed by joining every $m$ points. This would imply that $N$ and $m$ are not co-prime.)

Let us call
\begin{equation}\label{cr}
A_r = \Pi_m(a_r^0)= [a_r^0, a_{r+m}^0 \dots, a^0_{N+r-s}]_m  \dots [a^0_{s+r},\dots, a^0_{N-m+r}]_m.
\end{equation}
 We can see directly that $A_{r+m} = (a_r^0)^{-1}A_r a_r^0$, for all $r$. Once more, if $N$ and $m$ are coprimes, this property guarantees that {\it  all $A_r$ have the same Jordan form }, which we will call $J$.

Finally, notice that if $B_r = \Pi_m(b_r^0)$, then
\begin{equation}\label{norm3}
A_r = d_{r}^{-1} B_r d_{r}.
\end{equation}
Let us choose $d_{r}$ to be the matrix that conjugates $B_r$ to its Jordan normal form $J$, so that $A_r$ will all be in Jordan form. We can choose an order in the eigenvalues (for example, from smallest to largest) to ensure that the matrix is unique up to a factor that commutes with $J$. It is known that if a matrix commutes with a Jordan form matrix it must be  block diagonal
\[
\mathrm{diagonal}(r_1, \dots, r_s),
\]
where each $r_s$ is a Toepliz matrix, upper triangular, whenever the corresponding Jordan block is of the form
\[
\begin{pmatrix} \lambda & 1& 0&\dots & 0\\ 0&\lambda&1&\dots&0\\ \vdots & \ddots & \ddots& \dots&\vdots\\0&\dots&0&\lambda&1\\ 0& \dots&0&0&\lambda\end{pmatrix},
\]
or it is diagonal if the Jordan block is diagonal. Thus, $d_k$ are unique up to a block-diagonal matrix of this form.

Since
\[
B_{k+m} = (\ha_k^0)^{-1}B_k \ha_k^0,
\]
we have that
\[
A_{k+m} = J = d_{k+m}^{-1} B_{k+m} d_{k+m} = d_{k+m}^{-1} (\ha_k^0)^{-1}B_k \ha_k^0 d_{k+m}
\]
\[
= d_{k+m}^{-1} (\ha_k^0)^{-1}d_k d_k^{-1}B_k d_k d_k^{-1} \ha_k^0 d_{k+m} = (a_k^0)^{-1} A_k a_k^0= (a_k^0)^{-1} J a_k^0,
\]
for all $k$. Therefore, since $a_k^0$ commutes with the Jordan normal form, it must be a Toeplitz matrix of the form stated in the theorem, for all $k$.

 Finally, $d_k$ is unique up to a matrix commuting with $J$, lets call it $q_k$. We now turn our attention to the transformation of  $\ha_k^{m-1}$ under the change of lifting, namely
 \begin{equation}\label{ab}
 a_k^{m-1} = \hd_{k+m-1}^{-1} \hha_k^{m-1} \hd_{k+m},
 \end{equation}
where $\hha_k^{m-1} =d_{k+m-1}^{-1} \ha_k^{m-1} d_{k+m}$ ($d_k$ found above), and $\hd_r$ Toeplitz and commuting with $J$.

How to determine which entries generate the others depend very much on the particular point in the Grassmannian. In the generic case $\hd_r$ will all be diagonal; we will next describe the process generically. Using (\ref{ab}) we see that
\[
a^{m-1}_{k-m+1}a_{k-m+2}^{m-1}\dots a_{N+k-m}^{m-1} = q_k^{-1}\left(b_{k-m+1}^{m-1}b_{k-m+2}^{m-1}\dots b_{N+k-m}^{m-1}\right) q_k,
\]
for $k = 0,\dots, N-1$.

Before we describe the normalizations that will generate the syzygies, we recall that the determinants of $q_k$ are determined by (\ref{norm1}) for any $k=0,\dots, N-1$, whenever $N$ and $m$ are coprime. Let us call $\det d_k = \delta_k$. 

The last round of normalizations will be chosen  by equating those entries in place $(i, i+1)$, $i=1,2\dots,m-1$ with the entry $(2,1)$. 

If we denote by $q_k = \mathrm{diag}(q_k^1, \dots, q_k^n)$, and we denote the entries of \[b^k = b_{k-m+1}^{m-1}b_{k-m+2}^{m-1}\dots b_{N+k-m}^{m-1}\] by $b_{i,j}^k$ , then these normalizations result in equations of the form
\begin{equation}\label{uno}
\frac{q_k^{i+1}}{q_k^i} b^k_{i,i+1} = \frac{q_k^2}{q_k^1}b^k_{1,2},
\end{equation}
for $i=1,\dots,N-1$, and
\begin{equation}\label{dos}
\frac{q_k^{1}}{q_k^2} b^k_{2,1} = \frac{q_k^2}{q_k^1}b^k_{1,2}.
\end{equation}
Equation (\ref{dos}) solves for $q_k^2$ in terms of $q_k^1$ (if $b_{1,2}/b_{2,1}$ is not positive, we would need to choose different normalizations), and substituting it in (\ref{uno}) we get an expression for any $q_k^i$ in terms of $q_k^1$. Since $\det q_k = \det d_k^{-1}\delta_k$, where $d_k$ was determined in the normalization of $b_k^0$, $q_k^1$ is also determined.

 These last normalizations will produce as many syzygies in the entries in $b_k^{m-1}$ (linear or quadratic) as indicated in the statement of the theorem. The fact that the entries of $Q_k$, $k=0,\dots,N-1$, generate all other invariants of polygons in $\Gr(n,mn)$ is a consequence of the work in \cite{MMW}. 
} \end{proof}
 \end{theorem}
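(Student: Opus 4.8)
The plan is to build the normalized lift $V=(V_k)$ by spending the gauge freedom available in the choice of lift. Any two twisted lifts differ by a closed $N$-gon $d=(d_k)$ in $\GL(n)$, via $V_k=X_k d_k$, and under such a change the invariants of (\ref{invariants}) transform by $a_k^i=d_{k+i}^{-1}\ha_k^i d_{k+m}$. I would therefore treat the $Nn^2$ entries of $\{d_k\}$ as parameters to be consumed in three successive rounds: first fixing the determinants so that (\ref{norm1}) holds, then bringing each $a_k^0$ to block-diagonal Toeplitz form, and finally imposing the syzygies on $a_k^{m-1}$. This is the discrete moving frame normalization of \cite{MMW}, the residual gauge surviving each round being exactly what drives the next.

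First I would fix $\delta_k=\det d_k$. Taking determinants in $V_k=X_k d_k$ together with (\ref{norm1}) gives the multiplicative system $\prod_{i=k}^{k+m-1}\delta_i=Z_k$, with $Z_k$ read off from the starting lift; when $\gcd(m,N)=1$ this $N\times N$ system has a unique solution, exactly as in \cite{MB2}, and this is the first point at which the coprimality hypothesis is essential. To normalize $a_k^0$ I would use the $m$-product $\Pi_m$: because $\gcd(m,N)=1$, stepping by $m$ around the $N$ vertices reaches every index, so the cyclic product $A_r=\Pi_m(a_r^0)$ telescopes under the gauge change into the single conjugation $A_r=d_r^{-1}B_r d_r$, where $B_r=\Pi_m(\ha_r^0)$. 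One then checks $A_{r+m}=(a_r^0)^{-1}A_r a_r^0$, and coprimality forces all the $A_r$ to be mutually conjugate and hence to share one Jordan form $J$. Choosing $d_r$ to conjugate $B_r$ to $J$ makes every $A_r$ equal to $J$, and feeding this back into the telescoped identity shows each $a_k^0$ commutes with $J$; a matrix commuting with a Jordan form is precisely a block-diagonal matrix of upper-triangular Toeplitz blocks, which is part (1). This leaves $d_k$ determined only up to a factor $q_k$ that commutes with $J$.

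The third round spends this residual $q_k$ on $a_k^{m-1}$. I would telescope once more, now through the ordinary consecutive product of the $N$ successive factors $a_{k-m+1}^{m-1}\cdots a_{N+k-m}^{m-1}$, which by periodicity of $q$ collapses to $q_k^{-1}b^k q_k$ for a known product $b^k$. Generically $q_k$ is diagonal, so the conjugation rescales the $(i,j)$ entry of $b^k$ by $q_k^j/q_k^i$; equating the superdiagonal entries to the $(1,2)$ entry, and the $(2,1)$ entry to the $(1,2)$ entry, yields just enough relations to solve for the ratios $q_k^i/q_k^1$, after which the determinant fixed in the first round pins down $q_k^1$. I expect this to be the main obstacle, and it is where genericity is truly needed: one must assume $q_k$ is diagonalizable and that the ratio $b^k_{1,2}/b^k_{2,1}$ is positive so that the square root solving the $(2,1)$-versus-$(1,2)$ equation is real; off this generic locus the residual gauge is only block Toeplitz and the normalization has to be treated case by case. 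These $n-1$ relations per index are the syzygies that reduce the $Nn^2$ entries of $\{a_k^{m-1}\}$ to $N(n^2-n+1)$ independent functions, giving part (2).

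Finally, with all gauge freedom exhausted, the entries of $a_k^i$ for $i\neq 0,m-1$ remain free, and together with the normalized $a_k^0$ and $a_k^{m-1}$ they exhaust the dimension $N(m-1)n^2$ of $\Pm_N$. To conclude that these invariants form a genuine coordinate system — that they generate every $\SL(mn)$-invariant of the polygon rather than merely being functionally independent — I would invoke the normalization theory of \cite{MMW}, which guarantees completeness of the invariants attached to the discrete moving frame.
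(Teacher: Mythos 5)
Your proposal is correct and follows essentially the same route as the paper's own proof: fixing $\det d_k$ via the coprimality-solvable multiplicative system, telescoping the $m$-product $\Pi_m(a_r^0)$ to reduce the normalization of $a_k^0$ to a single Jordan conjugation $A_r=d_r^{-1}B_rd_r$, spending the residual commutant gauge $q_k$ on the consecutive product $b^k$ through the homogeneous entry equations (\ref{uno})--(\ref{dos}), and invoking \cite{MMW} for completeness. The only deviations are cosmetic: your syzygy count of $n-1$ relations per index and the total dimension $N(m-1)n^2$ are in fact the consistent versions of the paper's (apparently typo-afflicted) figures.
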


\begin{remark}{\rm It is very clear that these last normalizations could be chosen in many different ways (we could make entries constant, for example; or we could choose a different block, or relate entries from different blocks). Not all choices will work for us, and in order to be able to prove scaling invariance of the map, it is important that we choose the equations to be homogeneous in the entries of $b_k^r$. It is also simpler (although not necessary) if we choose entries
 from one block only to define the equations. The choice of $b_k^{m-1}$ versus $b_k^r$, $r\ne 0,m-1$ is just more convenient, but we could choose any other $r\ne 0$ instead.}
\end{remark}

\section{The Pentagram map on $\Gr(n,2sn)$}

\subsection{ Definition of the map }

Next, we define the Pentagram map for the Grassmannian $\Gr(n,2sn)$ for $s\geq 2$. The dimension of $\Gr(n,2sn)$ is clearly $(2s-1)n^2$.

Let $X=(X_{k})$ be a lift of a regular $N$-gon in $\Gr(n,2sn)$, and define the following subspaces
\begin{equation*}
\Pi_{k}=\langle X_{k},X_{k+2}\ldots,X_{k+2(s-1)},X_{k+2s}\rangle,
\end{equation*}
and
\begin{equation*}
\Omega_{k}=\langle X_{k+1},X_{k+3},\ldots,X_{k+2s-3},X_{k+2s-1}\rangle.
\end{equation*}

Note that $\dim \Pi_{k}=(s+1)n$ and $\dim \Omega_{k}=sn$. Therefore, generically, $\dim \Pi_{k}\cap \Omega_{k}=n$. 
\begin{definition}\label{pentadef1} Let $\wp=(l_{k})$ be a twisted $N$-gon in $G(n,2sn)$. Let $T(\wp)$ be the map taking the $N$-gon $\wp$ to the unique twisted $N$-gon whose vertices have a lift of the form
$T(X_{k})=\Pi_{k}\cap \Omega_{k}$. Notice that this is independent from the choice of the lift $X$. We call $T$ the Grassmannian Pentagram map.
\end{definition}

Notice that we are abusing notation by calling $T$ both the map on polygons and their lifts. We will go further and use the letter $T$ to denote the image of other data associated to $\wp$ in $T(\wp)$ (invariants, frames, etc). It is immediate to check that $T(\wp)$ is also twisted, with the same monodromy as $\wp$, using the fact that $\Pi_{N+k} = M \Pi_k$ and $\Omega_{N+k} = M \Omega_k$.

Next we will define this map in the moduli space of polygons, as defined by the invariants in our previous section. We will keep on using the letter $T$, defining $T: \mathcal{P}_{N}\longrightarrow \mathcal{P}_{N}$. We will specify the domain if needed. Let us assume that $V= (V_k)$ is the lift defined in theorem \ref{invariantsth} for a polygon $\wp$. Assume

\begin{equation}\label{22}
V_{k+2s}=V_{k}a_{k}^{0}+V_{k+1}a_{k}^{1}+\cdots+V_{k+2(s-1)}a_{k}^{2s-2}+V_{k+2s-1}a_{k}^{2s-1},
\end{equation}
as in (\ref{invariants})
for $2s$ matrices $a_{k}^{0}, a_{k}^{1},\ldots, a_{k}^{2s-1}$ of size $n\times n$ and with the properties described in theorem \ref{invariantsth}.

  Using the fact that $T(V_{k})\in \Pi_{k}$, we know that generically there exist matrices $c_i^j$ such
\begin{equation}\label{23}
T(V_{k})=V_{k}c_{k}^{0}+V_{k+2}c_{k}^{2}+\cdots\cdots+V_{k+2s}c_{k}^{2s}.
\end{equation}

If we now  use the relation (\ref{22}) we can
 replace $V_{k+2s}$ in (\ref{23}), and arrive to the following expression for $T(V_{k}):$
\begin{eqnarray*}
T(V_{k})&=&V_{k}c_{k}^{0}+V_{k+2}c_{k}^{2}+\cdots\cdots+V_{k+2s-2}c_{k}^{2s-2}\\
&+&\left(V_{k}a_{k}^{0}+V_{k+1}a_{k}^{1}+\cdots+V_{k+2(s-1)}a_{k}^{2s-2}+V_{k+2s-1}a_{k}^{2s-1}\right)c_{k}^{2s}\\
&=&V_k(c_k^0+a_k^0c_k^{2s})+V_{k+1}a_k^1c_k^{2s}+V_{k+2}(c_k^2+a_k^2c_k^{2s})+\dots + V_{k+2s-1}a_k^{2s-1}c_k^{2s}.
\end{eqnarray*}
Since we also assumed that $T(V_{k})\in \Omega_{k}$ then $c_{k}^{2r}+a_{k}^{2r}c_{k}^{2s}=0$, for $r=0,\dots,s-1$, and
\begin{equation}\label{26}
T(V_{k})=\left(V_{k+1}a_{k}^{1}+V_{k+3}a_{k}^{3}+\cdots +V_{k+2s-3}a_{k}^{2s-3}+V_{k+2s-1}a_{k}^{2s-1}\right)c_{k}^{2s}.
\end{equation}
Although the matrix $c_k^{2s}$ seems to be arbitrary, it is uniquely determined by the fact that the right hand side of (\ref{26}), not only for $k$, but also for  $k+1, \dots, k+2s-1$,  must be a lift for the image polygon $T(\wp)$, with the same properties as those found in theorem \ref{invariantsth}. Once $c_k^{2s}$ are chosen that way, we will be able to find the image of the matrices $a_i^j$ under the map $T$, as follows. Let us call $c_k^{2s} = \lambda_k$, so that we can write
\[
T(V_k) = \rho_k \rb_k\la_k,
\]
with $\rho_k = (V_k, V_{k+1}, V_{k+2}, \dots, V_{k+2s-1})$,
\begin{equation}\label{r}
\rb_k = \begin{pmatrix}O_n\\ a_k^1\\ O_n\\a_k^3\\\dots\\ O_n\\ a_k^{2s-1}\end{pmatrix},
\end{equation}
and where $\la_k$ are uniquely chosen so that $\{T(V_k)\}$ is the lift of the image polygon described in theorem \ref{invariantsth}.

\begin{example}{\rm In particular, (\ref{26}) implies that when $s=2$ the Pentagram map takes of following form
\begin{equation*}
T(V_{k})=[V_{k+1}a_k^1+X_{k+3}a_{k}^{3}]\la_k=(V_{k}\,V_{k+1}\,V_{k+2}\,V_{k+3})\begin{pmatrix} O_{n}\\ a_{k}^{1} \\ O_{n} \\ a_k^3 \end{pmatrix}\la_k = \rho_k\rb_k\la_k,
\end{equation*}
for all $k\in\mathbb{Z}$.}
\end{example}
Extending the map $T$ using, as usual, the pullback, we have that
\begin{eqnarray*}
T(\rho_k) &=& (T(V_k), \dots, T(V_{k+2s-1}))\\&=& \rho_k(\rb_k\la_k, R_k\rb_{k+1}\la_{k+1}, R_{k+1}\rb_{k+2}\la_{k+2}, \dots, R_{k+2s-2}\rb_{k+2s-1}\la_{k+2s-1}),
\end{eqnarray*}
where $R_{k+r} = Q_kQ_{k+1}\dots Q_{k+r}$ and $Q_k$ is given as in (\ref{K}). This expression can be written as
\begin{equation}\label{Lax2}
T(\rho_k) = \rho_k N_k \Lambda_k,
\end{equation}
where
\begin{equation}\label{N}
N_k = (\rb_k, R_k\rb_{k+1}, R_{k+1}\rb_{k+2}, \dots, R_{k+2s-2}\rb_{k+2s-1}),
\end{equation}
and
\begin{equation}\label{Lambda}
\Lambda_k = \begin{pmatrix} \la_k & O_n&\dots& O_n\\ O_n&\la_{k+1}&\dots&O_n\\ \ddots&\ddots&\ddots&\ddots\\ O_n&\dots& O_n&\la_{k+2s-1}\end{pmatrix}.
\end{equation}
One can recognize equations (\ref{K}) and (\ref{Lax2}), that is
\begin{equation}\label{Laxfree1}
\rho_{k+1} = \rho_k Q_k, \quad\quad T(\rho_k)=\rho_kN_k\Lambda_k,
\end{equation}
 as a parameter free Lax representation for the map $T$. The compatibility conditions  are given by
 \begin{equation}\label{compa}
 T(Q_k) = \Lambda_k^{-1}N_k^{-1} Q_k N_{k+1}\Lambda_{k+1}.
 \end{equation}
The last block-column of this equation defines the map $T$ on the moduli space of Grassmannian polygons, written in coordinates given by the invariants $a_i^j$.  The question we will resolve in the next subsection is how to introduce an spectral parameter in (\ref{Laxfree1}).

\subsection{A Lax representation for the pentagram map on $\Gr(n,2sn)$}

In this section we will prove that one can introduce a parameter $\mu$ in (\ref{Laxfree1}) in such a way that  (\ref{compa}) will be independent from $\mu$. This will define a true Lax representation that can be used for integration of the map. As it was done in \cite{MB2}, we will prove that the map $T$ is invariant under the scaling
\begin{equation}\label{scaling1}
 \quad\quad a_k^{2r+1} \to \mu a_k^{2r+1}, \quad\quad a_k^{2r} \to  a_k^{2r},
 \end{equation}
 for any $r= 0,1,\dots,s-1$ and any $k$ (this implies that all entries of these $n\times n$ matrix scale equally). This will involve several steps.

Let us denote the block columns of (\ref{N}) by $F_r$, so that $F_k = \rb_k$, and
\begin{equation}\label{F}
F_{k+\ell} = R_{k+\ell-1}\rb_{k+\ell} = Q_{k}Q_{k+1}\dots Q_{k+\ell-1}\rb_{k+\ell},
\end{equation}
$\ell=1,2,\dots$, with $\rb$ as in (\ref{r}). Our first lemma will allow us to decompose the block columns of $N_r$ into homogeneous terms according to (\ref{scaling1}). The lemma is almost identical to Lemma 3.1 in \cite{MB2}. Let us denote by $\Gamma$ the matrix
\begin{equation}\label{Gamma}
\Gamma = \begin{pmatrix}O_n&O_n&O_n&\dots&O_n\\ I_n&O_n&O_n&\dots&O_n\\ O_n&I_n&O_n&\dots&O_n\\ \vdots&\ddots&\ddots&\ddots&\vdots\\ O_n&\dots&O_n&I_n&O_n\end{pmatrix},
\end{equation}
where $I_n$ is the $n\times n$ identity matrix. Let us also denote by $\Tc$ the shift operator, namely $\Tc(V_k) = V_{k+1}$. This shift operator can trivially be extended to invariants using $\Tc(a_k^i) = a_{k+1}^i$ and to functions depending on the invariants using the pullback. We can also extend it to matrices whose entries are invariants by applying it to each entry, as it is customary.
\begin{lemma}\label{Mainlemma}
Let $F_{k+\ell}$ be given as in (\ref{F}). Then, there exist $n\times n$ matrices $\al_i^j$ such that
\begin{equation}\label{maineq}
F_{k+2\ell} = \sum_{r=1}^\ell F_{k+2r-1}\al_{2r-1}^{2\ell} + G_{k+2\ell}, \quad F_{k+2\ell+1} = \sum_{r=0}^\ell F_{k+2r}\al_{2r}^{2\ell+1}+\hG_{k+2\ell+1},
\end{equation}
for $\ell\ge 1$, where
\begin{equation}\label{Godd}
\hG_{k+2\ell+1} = \pb_k\left(\Tc G_{k+2\ell}\right)_m+\Gamma \Tc G_{k+2\ell}, \quad \al_{2r}^{2\ell+1} = \Tc\al_{2r-1}^{2\ell}, \quad \al_0^{2\ell+1} =\left(\Tc G_{k+2\ell}\right)_m,
\end{equation}
with
\begin{equation}\label{p}
\pb_k = \begin{pmatrix}a_k^0\\ O_n\\ a_k^2\\ O_n\\ \dots\\ a_k^{m-2}\\ O_n\end{pmatrix},
\end{equation}
 and
\begin{equation}\label{Geven}
G_{k+2\ell+2} = \Gamma \Tc\hG_{k+2\ell+1}, \quad \al_{2r+1}^{2\ell+2} = \Tc\al_{2r}^{2\ell+1}, \quad G_k = F_k = \rb_k.
\end{equation}
By $A_m$ we mean the last $n\times n$ block entry of a matrix $A$.
\end{lemma}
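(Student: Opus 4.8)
The plan is to prove the two identities in (\ref{maineq}) simultaneously by induction on $\ell$, alternating between them, and to carry along a second, structural invariant — the block-parity of $G$ and $\hG$ — which is what actually makes the recursion close. The engine is the one-step relation that follows at once from (\ref{F}): writing $F_{k+\ell}=Q_kQ_{k+1}\dots Q_{k+\ell-1}\rb_{k+\ell}$ and $F_{k+\ell+1}=Q_kQ_{k+1}\dots Q_{k+\ell}\rb_{k+\ell+1}$, and recognizing the second as $Q_k$ applied to the $\Tc$-shift of the first, one gets $F_{k+\ell+1}=Q_k\,\Tc F_{k+\ell}$. Since $Q_k$ in (\ref{K}) coincides with $\Gamma$ except in its last block column, which equals $\pb_k+\rb_k$, for any height-$2sn$ block column $W$ one has
\[
Q_k W=\Gamma W+(\pb_k+\rb_k)\,(W)_m ,
\]
where $(W)_m$ is the last block. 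Here $\Gamma$ and $\pb_k$ carry no factor of the scaling parameter $\mu$ in (\ref{scaling1}) while $\rb_k=F_k$ carries one, which is the source of the homogeneous splitting this lemma is designed to produce.

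For the base case, $G_k=F_k=\rb_k$ is exactly the $\ell=0$ instance of the odd identity, with $\al_0^1=(\Tc G_k)_m$ and $\hG_{k+1}=\pb_k(\Tc G_k)_m+\Gamma\Tc G_k$. For the even-to-odd step I would assume the even identity and apply $Q_k\Tc$. Since $\Tc$ distributes over matrix products and preserves shape, and since $Q_k\Tc F_{k+2r-1}=F_{k+2r}$ by the engine, each summand $F_{k+2r-1}\al_{2r-1}^{2\ell}$ is transported to $F_{k+2r}\,\Tc\al_{2r-1}^{2\ell}$; setting $\al_{2r}^{2\ell+1}=\Tc\al_{2r-1}^{2\ell}$ rebuilds the sum over $r=1,\dots,\ell$. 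The leftover $Q_k\Tc G_{k+2\ell}$ then splits by the displayed identity: the $\rb_k$-part is $F_k(\Tc G_{k+2\ell})_m=F_k\al_0^{2\ell+1}$ (the $r=0$ term, $\al_0^{2\ell+1}=(\Tc G_{k+2\ell})_m$), and the remainder $\Gamma\Tc G_{k+2\ell}+\pb_k(\Tc G_{k+2\ell})_m$ is $\hG_{k+2\ell+1}$, reproducing (\ref{Godd}).

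The odd-to-even step runs the sum in the same way — it becomes $\sum_{r=0}^\ell F_{k+2r+1}\,\Tc\al_{2r}^{2\ell+1}$, which after reindexing is $\sum_{r=1}^{\ell+1}F_{k+2r-1}\al_{2r-1}^{2\ell+2}$ with $\al_{2r+1}^{2\ell+2}=\Tc\al_{2r}^{2\ell+1}$ — but its honest content, and the main obstacle, is showing that $Q_k\Tc\hG_{k+2\ell+1}$ collapses to exactly $\Gamma\Tc\hG_{k+2\ell+1}=G_{k+2\ell+2}$, i.e.\ that both correction terms $(\pb_k+\rb_k)(\Tc\hG_{k+2\ell+1})_m$ vanish. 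This is invisible to degree counting; it is a statement about where the nonzero blocks lie. I would therefore maintain the invariant that $G_{k+2\ell}$ is supported on the even block-slots $2,4,\dots,2s$ while $\hG_{k+2\ell+1}$ is supported on the odd slots $1,3,\dots,2s-1$. It holds at the base ($\rb_k$ is even-supported; $\Gamma$ shifts slots down by one and $\pb_k$ is odd-supported, so $\hG_{k+1}$ is odd-supported), and the recursions (\ref{Godd}), (\ref{Geven}) propagate it because $\Gamma$ toggles parity and $\pb_k$ lives on odd slots.

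The payoff is immediate: an odd-supported vector has zero last block (slot $2s$ is even), so $(\Tc\hG_{k+2\ell+1})_m=O_n$, both correction terms drop, and the step closes with $G_{k+2\ell+2}=\Gamma\Tc\hG_{k+2\ell+1}$. The asymmetry with the even-to-odd step is exactly right: there $\Tc G_{k+2\ell}$ is even-supported, so its last block survives and legitimately yields the $r=0$ term $F_k\al_0^{2\ell+1}$. Alternating the two steps completes the induction, and reading off the $\mu$-degrees of $\pb_k$ (zero) and $\rb_k$ (one) along the way shows each $G_{k+2\ell}$ and $\hG_{k+2\ell+1}$ is homogeneous under (\ref{scaling1}), which is the property the sequel requires.
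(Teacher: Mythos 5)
Your proof is correct and follows essentially the same route as the paper's: induction on $\ell$ via the one-step relation $F_{k+\ell+1}=Q_k\Tc F_{k+\ell}$ together with the splitting $Q_kW=\Gamma W+(\pb_k+\rb_k)(W)_m$. The only difference is that you explicitly prove the even/odd block-support invariant that forces $(\Tc\hG_{k+2\ell+1})_m=O_n$, whereas the paper merely asserts it ("as indicated by the hat," elaborated in the remark on alternating zero and nonzero blocks) — a worthwhile tightening of the same argument.
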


From now on we will simplify our notation by denoting $F_{k+\ell}$ simply by $F_\ell$. We will introduce the subindex $k$ only if its removal creates confusion.
\begin{proof}
First of all, notice that the last column of $Q$ in (\ref{K}) is given by $\pb+\rb$, as in (\ref{p}) and (\ref{r}). Notice also that, from the definition in (\ref{F}) we have
\[
F_\ell = Q\Tc F_{\ell-1}.
\]
We proceed by induction. First of all, since $F = \rb$,
\[
F_1 = Q\Tc F = Q\rb_1 = (\pb+\rb)a_1^{m-1} + \Gamma \rb_1 = F a_1^{m-1}+\pb a_1^{m-1} + \Gamma\rb_1.
\]
We simply need to call $\hG_1 = \pb a_1^{m-1}+\Gamma r_1$, and $\al_0^1 = a_1^{m-1} = \Tc\left(F\right)_m$. Let's do the first even case also:
\[
F_2 = Q\Tc F_1 = Q(\Tc \hG_1+\Tc F a_2^{m-1}) = F_1 a_2^{m-1} + \Gamma \Tc\hG_1,
\]
and we call $G_2 = \Gamma\Tc \hG_1$. Notice that $Q\Tc\hG_1 = \Gamma\Tc \hG_1$ since the last block of $\hG_1$ vanishes, as indicated by the hat.

Now, assume
\[
F_{2\ell} = \sum_{r=1}^\ell F_{2r-1}\al_{2r-1}^{2\ell} + G_{2\ell}.
\]
Then
\[
F_{2\ell+1} = Q \Tc F_{2\ell}= \sum_{r=1}^\ell Q\Tc F_{2r-1} \Tc \al_{2r-1}^{2\ell} + Q\Tc G_{2\ell}.
\]
Since $Q\Tc G_{2\ell} =  \pb \left(\Tc G_{2\ell}\right)_m+\rb \left(\Tc G_{2\ell}\right)_m + \Gamma \Tc G_{2\ell}$ and $\rb = F$, if
\[
\hG_{2\ell+1} = \pb \left(\Tc G_{2\ell}\right)_m+ \Gamma \Tc G_{2\ell}, \quad \al_{2r}^{2\ell+1} = \Tc \al_{2r-1}^{2\ell},~ r=1,\dots,\ell, \quad  \al_0^{2\ell+1} = \left(\Tc G_{2\ell}\right)_m,
\]
then we have
\[
F_{2\ell+1} = \sum_{r=0}^\ell F_{2r} \al_{2r}^{2\ell+1} + \hG_{2\ell+1}.
\]
Looking into the even case, we have that
\[
F_{2\ell+2} = Q\Tc F_{2\ell+1} = \sum_{r=0}^\ell Q\Tc F_{2r}\Tc \al_{2r}^{2\ell+1} + Q \Tc \hG_{2\ell+1},
\]
and since $F_{2r+1} = Q\Tc F_{2r}$ and $Q\Tc \hG_{2\ell+1} = \Gamma \Tc \hG_{2\ell+1}$, if we call
\[
G_{2\ell+2} = \Gamma \Tc \hG_{2\ell+1}, \quad \al_{2r+1}^{2\ell+2} = \Tc \al_{2r}^{2\ell+1},
\]
we prove the lemma.
\end{proof}

Once we have this lemma we can identify homogeneous terms in the expansion of block columns. Indeed, notice that $\rb$ and $\pb$ are homogeneous of degree $1$ and $0$, respectively, with  respect to the scaling. Since the shift clearly preserves the degree, from the statement of the lemma we have that both $\hG_{k+2r+1}$ and $G_{2r}$ are homogeneous of degree $1$, for any $r$. Likewise, $\al_0^r$ are also homogeneous of degree $1$, for any $r$, from its definition, and since all others are obtained by shifting these, they are also.

Therefore, if we denote $G = F$, iteratively applying the lemma we have that
$F_r$ are in all cases a combination of $G_{2r}$ and $\hG_{2r+1}$ for the different values of $r$, with different types of factors of the form $\al_i^j$, each of degree $1$. One can also clearly see that if the columns of $F_r$ generate $R^{nm}$ for $r=0, 1,\dots, m-1$, them the columns of $G_{2r}$ and $\hG_{2r+1}$, $r=0, \dots,s-1$ will also generate the same space since the change of basis matrix will be upper triangular with ones down the diagonal. This new basis will be crucial in the calculations that follow.

Finally, a comment as to the reason for our notation. Notice that both $G_{2\ell}$ and $\hG_{2\ell+1}$ have alternative zero and non-zero clocks, with $G_{2\ell}$ starting with a zero block and $\hG_{2\ell+1}$ starting with a nonzero block. We are keeping that marked not only by the subindex but also by a hat, since as calculations become more involved it helps to have them be visibly different. It shows that the entire space can be written as a direct sum of two orthogonal subspaces, one generated by the block columns with hats and one generated by those without hats.

Next, assume that we drop the $\Lambda_k$ factor and define
\[
\Tb(V_k) = \rho_k \rb_k.
\]
Define further $c_k^i = \Tb(a_k^i)$,  as given by the following compatibility formula, which is (\ref{compa}) after removing $\Lambda_k$
\begin{equation}\label{companol}
\Tb(Q_k) = N_k^{-1} Q_k N_{k+1}.
\end{equation}
Notice that $c_k^i$ will need to be normalized by $\Lambda_k$ before we can declare it to be $T(a_k^i)$. Let us call $\ab_k$ the last block column in $Q_k$ (the $ith$ block will be $a_k^{i-1}$). Then, choosing the last block-column in both sides of the equation
\[
\Tb(\ab_k) = N_k^{-1} Q_k \Tc F_{k+2s-1} = N_k^{-1} F_{k+2s},
\]
which can be written as
\begin{equation}\label{lineareq}
 N_k \Tb(\ab_k) = F_{k+2s}.
\end{equation}
Thus {\it $\Tb(\ab_k)$ can be interpreted as the solution of the linear equation (\ref{lineareq})}. This will be crucial in what follows.

\begin{theorem}
The matrices $c_k^{i}$ are homogeneous with respect to the scaling (\ref{scaling1}), and
\[
d(c_k^{2\ell}) = 0, \quad\quad d(c_k^{2\ell+1}) = 1,
\]
for any $\ell=0,1,\dots,s-1$.
\end{theorem}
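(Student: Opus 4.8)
The plan is to read the $c_k^i$ directly off the linear equation (\ref{lineareq}), namely $N_k\Tb(\ab_k) = F_{k+2s}$, in which the block columns of $N_k$ are $F_k,\dots,F_{k+2s-1}$ and $\Tb(\ab_k)$ stacks the blocks $c_k^0,\dots,c_k^{2s-1}$. Since $\wp$ is regular these $2s$ columns form a basis of $\R^{2sn}$, so $N_k$ is invertible and $c_k^j$ is precisely the $j$-th block coefficient in the unique expansion $F_{k+2s} = \sum_{j=0}^{2s-1} F_{k+j}\,c_k^j$. The whole statement therefore reduces to computing the homogeneous degree of each coefficient of $F_{k+2s}$ in the basis $\{F_{k+j}\}$.

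First I would apply Lemma \ref{Mainlemma} with $\ell = s$ to obtain
\[
F_{k+2s} = \sum_{r=1}^s F_{k+2r-1}\,\al_{2r-1}^{2s} + G_{k+2s}.
\]
The odd-indexed columns $F_{k+2r-1}$ already appear with the factors $\al_{2r-1}^{2s}$, which the discussion following the lemma shows are homogeneous of degree $1$. The only thing standing between us and reading off the answer is the leftover term $G_{k+2s}$, which must be re-expressed in the working basis $\{F_{k+j}\}$.

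For this I would exploit the direct-sum structure noted after the lemma: the block columns split $\R^{2sn}=E\oplus O$ into the span of the even (hatless, $G$-type) positions and the odd (hatted, $\hG$-type) positions. Since $G_{k+2s}=\Gamma\Tc\hG_{k+2s-1}$ is supported on the even positions it lies in $E$, and the $s$ columns $G_k,G_{k+2},\dots,G_{k+2s-2}$ are linearly independent (being part of the basis produced by the unitriangular change of basis) and hence a basis of $E$. Because each of these is homogeneous of degree $1$ and $G_{k+2s}$ is likewise homogeneous of degree $1$, the coefficients in $G_{k+2s}=\sum_{r=0}^{s-1} G_{k+2r}\,\gamma_r$ must be homogeneous of degree $0$. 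This is the step I expect to be the crux, since it is exactly what forces the parity split in the degrees; it requires knowing both that $G_{k+2s}\in E$ and that the $G$-columns span $E$.

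Finally I would re-expand each $G_{k+2r}$ back in the $F$-basis using $G_{k+2r}=F_{k+2r}-\sum_{p=1}^r F_{k+2p-1}\,\al_{2p-1}^{2r}$ (the inverse of Lemma \ref{Mainlemma}), multiply on the right by $\gamma_r$, and collect the coefficient of each $F_{k+j}$. The coefficient of an even column $F_{k+2r}$ is then exactly $\gamma_r$, of degree $0$, while the coefficient of an odd column $F_{k+2r-1}$ is a sum of $\al_{2r-1}^{2s}$ together with products of the form $\al\,\gamma$ of a degree-$1$ factor and a degree-$0$ factor, hence of degree $1$. By uniqueness of the expansion this identifies $c_k^{2\ell}=\gamma_\ell$ and $c_k^{2\ell+1}$ with the odd-column coefficients, giving $d(c_k^{2\ell})=0$ and $d(c_k^{2\ell+1})=1$. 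The remaining work is the routine index bookkeeping in the collection step, which I would not grind through here.
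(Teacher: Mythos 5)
Your proposal is correct, but it takes a genuinely different route from the paper. The paper solves the system (\ref{lineareq}) by Cramer's rule: it forms the determinants $D=\det N_k$ and $D_{r,i}^j$ (with one column of $F_r$ replaced by a column of $F_{k+2s}$), reduces each to a homogeneous normal form by repeated column operations based on (\ref{maineq}), shows $d(D)=d(D_{2\ell,i}^j)=2sn$ while $d(D_{2\ell+1,i}^j)=2sn+1$ --- the odd case requiring a delicate expansion into terms $\left(\al_{2\ell+1}^{2r}\right)_{i,p}\det(\dots)$ --- and then divides. You instead bypass determinants entirely: you read $c_k^j$ as the unique block coefficients of $F_{k+2s}$ in the basis $\{F_{k+j}\}_{j=0}^{2s-1}$, apply Lemma \ref{Mainlemma} at $\ell=s$, and resolve the leftover $G_{k+2s}$ inside the even coordinate subspace $E$ spanned by $G_k,\dots,G_{k+2s-2}$, where uniqueness of the expansion plus homogeneity of all the columns forces the coefficients $\gamma_r$ to have degree $0$; back-substituting yields the closed forms $c_k^{2\ell}=\gamma_\ell$ and $c_k^{2\ell+1}=\al_{2\ell+1}^{2s}-\sum_{r>\ell}\al_{2\ell+1}^{2r}\gamma_r$, from which the degrees are immediate. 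All the ingredients you invoke are legitimate and consistent with the paper's own remarks: the parity-alternating support of $G_{2\ell}$ and $\hG_{2\ell+1}$ gives the direct sum $E\oplus O$, the unitriangular change of basis gives linear independence of the $G$-columns (generically, the same genericity the paper uses when it discards the $g_{2\ell}^i$ terms), and your uniqueness argument for $d(\gamma_r)=0$ is exactly the content the paper extracts as a ratio of two degree-$2sn$ determinants. What your approach buys is transparency --- the parity split in the degrees is visible in explicit formulas rather than hidden in determinant bookkeeping; what the paper's heavier Cramer's-rule setup buys is reuse, since the same determinants $D_{r,i}^j$ with the normalization factors $\la_k$ inserted column-by-column are precisely what drives the proof of Theorem \ref{lambdath}, whereas your method would need to be re-adapted there.
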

\begin{proof}

As in the previous proof, we will drop the subindex $k$ and introduce it only if needed.

First of all, let us analyze the homogeneity with respect to (\ref{scaling1}) of the determinant
\[
D = \det N = \det(F, F_1, \dots, F_{2s-1}).
\]
From (\ref{maineq}) we can rewrite it as
\[
D = \det(F, \hG_1, G_2, \dots, G_{2s-2}, \hG_{2s-1}),
\]
and since $d(G_{2\ell})  =d(\hG_{2\ell-1}) = 1$ for all $\ell$, we have that $D$ is homogeneous and $d(D) = 2sn$.

Next, denote by $f_{r}^j$ the $j$th column of $F_r$, and let $F_{r,i}^j$ be the block column whose individual columns are equal to those of $F_r$, except for the $i$th column which is equal to the $j$th column of $F_{2s}$, for any $r=0,1,\dots,2s-1$.

Define next
\[
D_{2\ell, i}^j = \det(F, F_1, \dots, F_{2\ell-1}, F_{2\ell,i}^j, F_{2\ell+1}, \dots, F_{2s-1}).
\]
We first notice that using (\ref{maineq}), we can substitute the $i$th column of $F_{2\ell,i}^j$ by the $j$th column of $G_{2s}$, since $F_{2s}$ and $G_{2s}$ differ in a linear combination of columns of $F_{2r+1}$, $r< s$. Let us call the new matrix $F_{2\ell,i}^{j,g}$.  We then simplify the part of the determinant to the right of $F_{2\ell,i}^j$ using (\ref{maineq}), to become
\[
D_{2\ell, i}^j = \det(F, F_1, \dots, F_{2\ell-1}, F_{2\ell,i}^{j,g}, \hG_{2\ell+1}+ f_{2\ell}^i\left(\al_{2\ell}^{2\ell+1}\right)_i, G_{2\ell}, \dots, G_{2s-2},\hG_{2s-1} + f_{2\ell}^i\left(\al_{2\ell}^{2s-1}\right)_i),
\]
where $\left(\al_{2\ell}^{r}\right)_i$ denotes the $i$th row of $\al_{2\ell}^{r}$.

We now proceed to simplify the columns $f_{2\ell}^i$ which can be substituted by $g_{2\ell}^i$ (the $i$th column of $G_{2\ell}$) since their difference is generated by odd vectors with subindiced less that $2\ell$. We can then simplify the rest of the determinant, using (\ref{maineq}) once more. We get
\[
D_{2\ell, i}^j = \det(F, \hG_1, \dots, \hG_{2\ell-1}, G_{2\ell,i}^j, \hG_{2\ell+1}+ g_{2\ell}^i\left(\al_{2\ell}^{2\ell+1}\right)_i, G_{2\ell}, \dots, G_{2s-2},\hG_{2s-1} + g_{2\ell}^i\left(\al_{2\ell}^{2s-1}\right)_i),
\]
where $G_{2\ell,i}^j$ indicates the matrix equal to $G_{2\ell}$, except for the $i$th column which is equal to $g_{2s}^j$. Our last step is to notice that we have enough $G_{2r}$ block-columns ($r=0,\dots,s-1$, $r\ne s$) that together with $G_{2\ell,i}^j$ generically generate the entire subspace generated by $G_{2r}$, $r=0,\dots,s-2$. But the vector $g_{2\ell}^i$ belongs to this subspace, and hence it will be a combination of the columns of those blocks. Thus
\[
D_{2\ell, i}^j = \det(F, \hG_1, \dots, \hG_{2\ell-1}, G_{2\ell,i}^j, \hG_{2\ell+1}, G_{2\ell}, \dots, G_{2s-2},\hG_{2s-1}),
\]
which clearly shows that $D_{2\ell, i}^j$ is homogeneous, and since $d(G_{2\ell,i}^j) = 1$, $d(D_{2\ell, i}^j) = 2sn$ also.

Finally, $\Tb(\ab)$ is the solution of system of linear equations (\ref{lineareq}), and so, by Cramer's rule, the $(j,i)$ entry of $c^{2\ell} = \Tb(a^{2\ell})$ is of the form
\[
\frac{D_{2\ell,i}^j}D.
\]
Therefore, $c^{2\ell}$ is homogeneous and $d(c^{2\ell}) = 0$, $\ell=0,\dots,s-1$.

We now study $c^{2\ell+1}$. Consider  the determinant
\[
D_{2\ell+1, i}^j = \det(F, F_1, \dots, F_{2\ell}, F_{2\ell+1,i}^j, F_{2\ell+2}, \dots, F_{2s-1}),
\]
where $F_{2\ell+1,i}^j$ is defined as $F_{2\ell+1}$ substituting the $i$th column with the $j$th column of $F_{2s}$. As before, using (\ref{maineq}), we start by noticing that we can substitute the $i$th column of $F_{2\ell+1,i}^j$ by the $j$th column of $G_{2s}$, call it $g_{2s}^j$, plus $f_{2\ell+1}^i\left(\al_{2\ell+1}^{2s}\right)_{i,j}$ that comes from the expansion of $F_{2s}$ in terms of odd terms, and the fact that the $i$th column of $F_{2\ell+1}$ is missing.  The expression $\left(\al_{2\ell+1}^{2s}\right)_{i,j}$ is the $(i,j)$ entry of $\al_{2\ell+1}^{2s}$. We call the resulting matrix $F_{2\ell+1,i}^{j,g}$. If we simplify the right hand side of the determinant it becomes
\begin{eqnarray*}
&&D_{2\ell+1, i}^j \\
&=& \det(F, F_1, \dots, F_{2\ell}, F_{2\ell+1,i}^{j,g}, G_{2\ell+2}+ {f}_{2\ell+1}^i\left(\al_{2\ell+1}^{2\ell+2}\right)_i, \hG_{2\ell+3}, \dots, G_{2s-2}+{f}_{2\ell+1}^i\left(\al_{2\ell+1}^{2s-2}\right)_i,\hG_{2s-1})\\
&=& \left(\al_{2\ell+1}^{2s}\right)_{i,j}\det(F, F_1, \dots, F_{2\ell}, F_{2\ell+1},G_{2\ell+2},\dots, \hG_{2s-1})\\
&+& \det(F, F_1, \dots, F_{2\ell}, \hG_{2\ell+1,i}^{j}, G_{2\ell+2}+ {f}_{2\ell+1}^i\left(\al_{2\ell+1}^{2\ell+2}\right)_i, \hG_{2\ell+3}, \dots, G_{2s-2}+{f}_{2\ell+1}^i\left(\al_{2\ell+1}^{2s-2}\right)_i,\hG_{2s-1}),
\end{eqnarray*}
where $\hG_{2\ell+1,i}^j$ is equal to $\hG_{2\ell+1}$ except for the $i$th column which is equal to $g_{2s}^j$.

As before, $f_{2\ell+1}^i$ and $\widehat{g}_{2\ell+1}^i$ differ in a sum of columns of $F_{2r}$, $r\le \ell$. Thus, we can substitute $f_{2\ell+1}^i$ by $\widehat{g}_{2\ell+1}^i$ in the determinant. After that, we proceed to simplify the rest of the determinant obtaining
\begin{eqnarray*}
&&D_{2\ell+1, i}^j = \left(\al_{2\ell+1}^{2s}\right)_{i,j} D\\
&+&  \det(F, \hG_1, \dots, G_{2\ell}, \hG_{2\ell+1,i}^{j}, G_{2\ell+2}+ \widehat{g}_{2\ell+1}^i\left(\al_{2\ell+1}^{2\ell+2}\right)_i, \hG_{2\ell+3}, \dots, G_{2s-2}+\widehat{g}_{2\ell+1}^i\left(\al_{2\ell+1}^{2s-2}\right)_i,\hG_{2s-1}),
\end{eqnarray*}
Unlike the previous case, this time one of the columns of $ \hG_{2\ell+1,i}^{j}$ is even, and  hence, the odd columns (other than $\widehat{g}_{2\ell+1}^i$) do not generate the odd orthogonal subspace since they are one dimension short. Thus, including $\widehat{g}_{2\ell+1}^i$, we have an equal number of odd and even columns and we need to expand.

The term that includes no $\widehat{g}_{2\ell+1}^i$ in the expansion is given by
\[
\det(F, \hG_1, G_2, \dots, G_{2\ell}, \hG_{2\ell+1,i}^j, G_{2\ell+2}, \dots G_{2s-2}, \hG_{2s-1})  =0,
\]
since, as we said before, there are more even columns that odd columns.
The remaining terms in the expansion are
\[
\sum_{r=\ell+1}^{s-1}\sum_{p=1}^n\left(\al_{2\ell+1}^{2r}\right)_{i,p}\det(F, \hG_1, \dots, \hG_{2\ell+1,i}^j, \dots, \hG_{2r-1}, G_{2r}^i+\widehat{g}_{2\ell+1}^i e_p^T, \hG_{2r-1}, \dots, G_{2s-2}, \hG_{2s-1}),
\]
where $\left(\al_{2\ell+1}^{2r}\right)_{i,p}$ is the $(i,p)$ entry of $\al_{2\ell+1}^{2r}$, $G_{2r}^i$ has zero $i$th column and where $e_p$ is the standard canonical basis of $\R^n$ with a $1$ in the $p$th entry and zero elsewhere.   Each one of these determinants has an equal number of odd and even columns. Each column is homogeneous of degree $1$, and so each determinant is homogeneous of degree $2sn$. But, like $D$ (also of degree $2sn$), they are multiplied by $\left(\al_{2\ell+1}^{2r}\right)_{i,p}$, homogeneous of degree $1$. Hence, $D_{2\ell+1, i}^j $ is homogeneous and $d(D_{2\ell+1, i}^j) = 2sn+1$.

Finally,  since, according to (\ref{lineareq}), the $(j,i)$ entry of $\Tb(a^{2\ell+1}) = c^{2\ell+1}$ is equal to
\[
\frac{D_{2\ell+1, i}^j }D,\]
we conclude that $c^{2\ell+1}$ is homogeneous and $d(c^{2\ell+1}) = 1$. This concludes the proof of the theorem.
\end{proof}

Our final step is to introduce the normalization matrices $\la_k$ and to study how they might affect the scaling degree of $T(a_k^i)$. Recall that $\la_k$ has two factors: $d_k$, used to normalize $c_k^0$ and to transform them into their Jordan form (as in (\ref{bnorm})); and $q_k$, in the generic case, a diagonal matrix used to define syzygies among the entries of $c_k^{m-1}$, as in (\ref{ab}) ($c_k^{m-1}$ plays the role of $b_k^{m-1}$ in (\ref{ab})).

\begin{lemma} The matrices $\la_k$ are homogeneous with respect to (\ref{scaling1}) and
\[
d(\la_k) = -1,
\]
for all $k$.
\end{lemma}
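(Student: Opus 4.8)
We must show that the normalization matrices $\la_k$, which turn the "bare" invariants $c_k^i=\Tb(a_k^i)$ into the properly normalized image invariants $T(a_k^i)$, are homogeneous with respect to the scaling (\ref{scaling1}) with degree $d(\la_k)=-1$.

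The plan is to exploit the factorization $\la_k = d_k q_k$ described in the text and to track the scaling degree through each factor separately. First I would recall, from the previous theorem, that the bare invariants already have known degrees: $d(c_k^{2\ell})=0$ and $d(c_k^{2\ell+1})=1$. The role of $\la_k$ is to produce $T(a_k^i)$ having exactly the structure of Theorem \ref{invariantsth}, and the transformation law is (\ref{bnorm})–(\ref{ab}) with $c_k^i$ playing the role of the $\ha_k^i$. My strategy is to show that the normalization conditions defining each factor of $\la_k$ are, by construction, homogeneous equations in the scaled variables, so that the solution $\la_k$ is forced to be homogeneous; then I would read off the degree.

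The key steps, in order, are as follows. \emph{Step 1 (the determinant factor).} The condition (\ref{norm1}), $\det(T(V_k),\dots,T(V_{k+m-1}))=1$, fixes $\det\la_k$ up to the coprimality argument of \cite{MB2}. Using $T(\rho_k)=\rho_k N_k\Lambda_k$ and that the $V$-frame already satisfies (\ref{norm1}), one gets $\prod_{i=k}^{k+m-1}\det\la_i = (\det N_k)^{-1} = D^{-1}$. From the previous theorem $D=\det N$ is homogeneous of degree $d(D)=2sn$. Since this determines the $\det\la_k$ uniquely and symmetrically (the same coprimality cyclic argument as for $\delta_k$ in Theorem \ref{invariantsth}), each $\det\la_k$ must be homogeneous, and dividing the total degree $-2sn$ among $2s$ equal factors forces $d(\det\la_k)=-n$, i.e. the "average scaling weight" of $\la_k$ is $-1$. \emph{Step 2 (the $d_k$ factor, Jordan normalization of $c_k^0$).} Here I would use that $c_k^0$ is already homogeneous of degree $0$ by the previous theorem. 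Hence the $m$-product $\Pi_m(c_k^0)$ and its Jordan form are degree-$0$ objects, so the conjugating matrix $d_k$ can be chosen degree $0$ (scaling-invariant). Thus $d_k$ contributes nothing to the degree of $\la_k$. \emph{Step 3 (the $q_k$ factor, the syzygy normalization of $c_k^{m-1}$).} This is where the degree $-1$ is concentrated. The entry $c_k^{m-1}$ is homogeneous of degree $1$ (previous theorem, with $m-1=2s-1$ odd). The syzygy-defining equations (\ref{uno})–(\ref{dos}) are ratios $q_k^{i+1}/q_k^i$ times entries of the degree-$1$ product $b^k$; equating degree-$1$ quantities shows each ratio $q_k^{i}/q_k^1$ is degree $0$, while the determinant constraint $\det q_k = \det d_k^{-1}\delta_k$ (with $\delta_k=\det\la_k$ from Step 1, of degree $-n$, and $\det d_k$ of degree $0$ from Step 2) fixes the overall scale of $q_k$ to degree $-n$, i.e. each diagonal entry $q_k^i$ to degree $-1$. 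Assembling $\la_k=d_k q_k$ then yields a homogeneous $\la_k$ with every entry of degree $-1$, which is the assertion $d(\la_k)=-1$.

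The main obstacle I anticipate is \emph{Step 1 and its interplay with Step 3}: one must verify that the scaling really distributes uniformly across the cyclic determinant relation so that each individual $\det\la_k$ (not merely their product) is homogeneous of the same degree. This is exactly the place where the coprimality of $N$ and $m$ is indispensable, as in Theorem \ref{invariantsth}: coprimality guarantees the cyclic system $\prod_{i=k}^{k+m-1}\det\la_i=D^{-1}$ has a unique solution, and uniqueness together with the homogeneity of every $D$ (shifted) forces each factor to be homogeneous. A secondary subtlety is that the argument is genuinely generic: in Step 3 the $q_k$ are diagonal only in the generic case, and one must note that the homogeneity conclusion is stable under the generic choice of normalization equations (as emphasized in the Remark following Theorem \ref{invariantsth}, where homogeneity of the defining equations in the entries of $b_k^r$ was explicitly required). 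Once these two points are in place, reading off $d(\la_k)=-1$ from the degrees accumulated in Steps 1–3 is routine.
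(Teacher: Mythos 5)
Your proposal is correct and takes essentially the same route as the paper's proof: the factorization $\la_k = d_k q_k$, scaling invariance of $d_k$ from $d(c_k^0)=0$ via the Jordan normalization of $\Pi_m(c_k^0)$, homogeneity of the entries of $q_k$ from the syzygy equations (\ref{uno})--(\ref{dos}) applied to the degree-one $c_k^{m-1}$ together with the determinant constraint $\det q_k = \det d_k^{-1}\delta_k$, and finally $d(\delta_k)=-n$ deduced from $\det(N_k)\,\delta_k\delta_{k+1}\cdots\delta_{k+m-1}=1$ with $d(\det N_k)=nm=2sn$ and the coprimality/uniqueness argument of \cite{MB2}. The only difference is the order of presentation: you settle the degree of $\det\la_k$ first, while the paper reduces the claim to $d(\delta_k)=-n$ and proves that last.
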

\begin{proof}
Since $\la_k = d_k q_k$, we will look at each factor separately.

The first factor $d_k$ is determined by the normalization of $B_k$ as in (\ref{norm3}), where, in our case, $B_k = \Pi_m(c_k^0)$ as in (\ref{pi}). But, given that $c_k^0$ are invariant under the scaling, $B_k$ will also be, and hence so will $d_k$.

The second factor, $q_k$,  is found by using a number of equations of the form (\ref{uno})-(\ref{dos}), which finds each entry of $q_k$ as fuctions of the first entry $q_k^1$. Since $c_k^{m-1}$ (which plays the role of $b_r^{m-1}$) is homogeneous with respect to the scaling, equations (\ref{uno})-(\ref{dos}) imply that $q_k^i$ are homogeneous also, with equal degree. Also, since $\det q_k = \det d_k^{-1} \delta_k$, where $\delta_k = \det \la_k$, each entry of $q_k$ will have degree equal to $d(\delta_k)/n$. Hence, to prove the lemma we need to show that $d(\delta_k) = -n$.

But this follows from the fact that $N_k\Lambda_k = \rho_k^{-1}T(\rho_k)$, where $\Lambda_k$ is as in (\ref{Lambda}), must have determinant equals $1$ since $\rho_k$ does. Therefore,
\[
\det(N_k) \delta_k\delta_{k+1},\dots,\delta_{k+m-1} = 1.
\]
If we now apply the scaling, and having in mind that $d(N_k) = nm$, we get
\[
\mu^{nm}\det(N_k) \hat\delta_k\hat\delta_{k+1},\dots,\hat\delta_{k+m-1} = 1,
\]
where $\hat\delta_{k+1}$ is the scaled determinant. As show in \cite{MB2}, this system has a unique solution whenever $N$ and $m$ are coprime. But
\[
\hat\delta_{k+r} = \delta_k \mu^{-n},
\]
for all $k$, is clearly a solution. Hence $\delta_n$ are homogeneous and $d(\delta_k) = -n$. This concludes the proof.
\end{proof}

We are now in position to prove our main theorem.

\begin{theorem}\label{lambdath} The Grassmannian pentagram map on the moduli space $\Pm_N$ defined by (\ref{compa}) is invariant under the scaling (\ref{scaling1}).
\end{theorem}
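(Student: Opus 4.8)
The plan is to assemble the two homogeneity results already in hand — the theorem computing $d(c_k^{2\ell})=0$ and $d(c_k^{2\ell+1})=1$ for the unnormalized images $c_k^i = \Tb(a_k^i)$, and the lemma computing $d(\la_k)=-1$ for all $k$ — and to track how the normalization by $\Lambda_k$ converts the scaling degrees of the $c_k^i$ into those of the genuine images $T(a_k^i)$. The substantive work having been done in the preceding theorem and lemma, this statement is essentially their bookkeeping consequence.

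First I would make explicit the relationship between $\Tb$ and $T$. The two maps differ only in that the unnormalized lift $\Tb(V_k)=\rho_k\rb_k$ is replaced by the normalized lift $T(V_k)=\rho_k\rb_k\la_k$; that is, the new lift equals the old one multiplied on the right by $\la_k$. This is precisely the situation of the change-of-lift formula (\ref{bnorm}), with the full normalization matrix $\la_k=d_kq_k$ now playing the role that $d_k$ plays there. Reading off the last block-column of (\ref{compa}) and comparing it with (\ref{companol}), the same equation stripped of its $\Lambda$ factors, one obtains for every block index $i$
\[
T(a_k^i)=\la_{k+i}^{-1}\,c_k^i\,\la_{k+m}.
\]
The only care here is in confirming the index shifts $k+i$ and $k+m$ on the two sides.

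Next the degree computation is immediate. The scaling degree is additive under matrix products and changes sign under inversion, so since $d(\la_r)=-1$ for every $r$ by the lemma,
\[
d\bigl(T(a_k^i)\bigr)=-d(\la_{k+i})+d(c_k^i)+d(\la_{k+m})=1+d(c_k^i)-1=d(c_k^i).
\]
The two $\pm 1$ contributions of $\la$ cancel, and invoking the preceding theorem gives $d\bigl(T(a_k^{2\ell})\bigr)=d(c_k^{2\ell})=0$ and $d\bigl(T(a_k^{2\ell+1})\bigr)=d(c_k^{2\ell+1})=1$ for every $\ell=0,\dots,s-1$ and every $k$.

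Finally I would translate these degrees into invariance of the map. Denote the scaling (\ref{scaling1}) by $\sigma_\mu$; it assigns weight $0$ to each even coordinate $a_k^{2\ell}$ and weight $1$ to each odd coordinate $a_k^{2\ell+1}$. To say that $T$ is invariant under the scaling is exactly to say that $T\circ\sigma_\mu=\sigma_\mu\circ T$, which holds if and only if, as a function of the coordinates, each component $T(a_k^i)$ is homogeneous of degree equal to the scaling weight of $a_k^i$ itself. The degree computation above establishes precisely this matching — degree $0$ for the even blocks and degree $1$ for the odd blocks — so $T$ commutes with $\sigma_\mu$ and is invariant under (\ref{scaling1}). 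The only genuine obstacle is the bookkeeping of index shifts and signs in the normalization formula; all the analytic content lives in the homogeneity theorem and the lemma, which this argument simply combines.
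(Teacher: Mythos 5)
Your proposal is correct, but it takes a genuinely different and more economical route than the paper. The paper proves the theorem by re-running the entire Cramer's-rule analysis on the $\Lambda$-modified linear system $N_k\Lambda_k T(\ab_k)=F_{k+2s}\la_{k+2s}$: it introduces determinants $D_{r,i}^{j,\la}$ whose block-columns carry the factors $\la_r$, decomposes them once more into homogeneous pieces via Lemma \ref{Mainlemma}, treats the even and odd columns in separate cases, and computes $d(D_{2\ell,i}^j)=0$ and $d(D_{2\ell+1,i}^j)=1$ directly. You bypass all of that by extracting the exact identity $T(a_k^i)=\la_{k+i}^{-1}c_k^i\la_{k+m}$ from the last block-column of (\ref{compa}); this is legitimate, since $N_k^{-1}F_{k+2s}$ stacks the blocks $c_k^0,\dots,c_k^{2s-1}$ by (\ref{lineareq}), the factors $\Lambda_k^{-1}$ and $\Lambda_{k+1}$ contribute precisely $\la_{k+i}^{-1}$ on the left and $\la_{k+2s}$ on the right, and the result is the change-of-lift formula (\ref{bnorm}) with $d_k$ replaced by $\la_k$. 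Because the lemma gives entrywise homogeneity $\la_r\mapsto\mu^{-1}\la_r$ for every $r$, the two contributions cancel and $d(T(a_k^i))=d(c_k^i)$, which matches the weights of (\ref{scaling1}) and yields $T\circ\sigma_\mu=\sigma_\mu\circ T$. Your argument additionally makes transparent that only the $k$-independence of $d(\la_k)$ matters, not its value $-1$; what the paper's longer route buys is a template for handling $\la$-weighted determinants that the authors then reuse, with details omitted, in the odd case $m=2s+1$.
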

\begin{proof}
We need to show that $T(a_k^i)$ are homogeneous, and $d(T(a_k^{2\ell})) = 0$, $d(T(a_k^{2\ell+1})) = 1$ for $\ell=0,\dots,s-1$. As in previous proofs, we will drop the subindex unless there could be some confusion.

Using (\ref{compa}), and denoting by $\ab$ the last column of $Q$, we can write $T(\ab)$ as
\[
T(\ab) =\Lambda^{-1}N^{-1} Q \Tc N \begin{pmatrix} O_n\\ \vdots\\ O_n\\ \la_{m+1}\end{pmatrix},
\]
or as the solution of the linear system of equations
\[
N\Lambda T(\ab) = Q \Tc N \begin{pmatrix} O_n\\ \vdots\\ O_n\\ \la_{m+1}\end{pmatrix}.
\]
Since we plan to use Cramer's rule once more, we will study the associated determinants.

To start with, we know that $\det(N\Lambda)$ is a homogeneous function of degree $nm-nm=0$. Define
\begin{eqnarray*}
D_{r,i}^j &=&\det(F\la, F_1\la_1,\dots,F_{r-1}\la_{r-1},F_{r,i}^{j,\la},F_{r+1}\la_{r+1},\dots,F_{2s-1}\la_{2s-1})\\&=& \det(F, F_1,\dots,F_{r-1},F_{r,i}^{j,\la},F_{r+1},\dots,F_{2s-1})\delta_1\dots\delta_{r-1}\delta_{r+1}\dots\delta_{2s-1},
\end{eqnarray*}
where $F_{r,i}^{j,\la}$ has all the columns equal to $F_{r}\la_{r}$, except for the $i$th column which is given by the $j$th column of $F_{2s}\la_{2s}$, that is, by $F_{2s}\la_{2s}^j = F_{2s}d_{2s}q_{2s}^j e_j$, with $d_{2s}$ and $q_{2s}$ as in (\ref{norm3}) and (\ref{ab}).

{\it Assume $r = 2\ell$}.

As in the first lemma, we can use (\ref{maineq}) to write down the determinants in terms of homogeneous components. For example, the $j$th column coming from $F_{2s}$ can be replaced by that of $G_{2s}$ in $F_{r,i}^{j,\la}$ and we can simplify the terms to the left of it, including the remaining columns of $F_{2\ell}$. We obtain
\begin{eqnarray*}
&&\det(F, F_1,\dots,F_{2\ell-1},F_{2\ell,i}^{j,\la},F_{2\ell+1},\dots,F_{2s-1})\\
&=&\det(F, F_1,\dots,F_{2\ell-1},G_{2\ell,i}^{j,\la},\hG_{2\ell+1}+f_{2\ell}^i(\al_{2\ell}^{2\ell+1})_i,G_{2\ell+2},\dots,\hG_{2s-1}+f_{2\ell}^i(\al_{2\ell}^{2s-1})_i).
\end{eqnarray*}
The block column $G_{2\ell,i}^{j,\la}$ is equal to $G_{2\ell}\la_{2\ell}$ except for the $ith$ column, which is equal to $G_{2s}\la_{2s}^j$. We can further use (\ref{maineq}) to substitute $f_{2\ell}^i$ with $g_{2\ell}^i$. Once we do that, we see that the  even orthogonal subspace is generated by the columns of $G_{2r}$, $r=0,\dots,s-1$ except for the extra column in $G_{2\ell}$, which in this case is generically covered by combination of columns in $G_{2s}\la_{2s}^j$. Thus, as before, we can remove the $g_{2\ell}^i$ terms from the determinant and simplify to the left of the $2\ell$ position.

We obtain
\[
D_{r,i}^j =\det(F, \hG_1, G_2\dots,\hG_{2\ell-1},G_{2\ell,i}^{j,\la},\hG_{2\ell+1},\dots,\hG_{2s-1}) \delta\delta_1\dots\delta_{2\ell-1}\delta_{2\ell+1}\dots\delta_{2s-1},
\]
with hats and non-hats alternating. All of the columns of $G_r$ are homogeneous of degree $1$ for any $r$. On the other hand, the columns of $G_{2\ell,i}^{j,\la}$ have degree zero since $G_{2\ell}$ has degree $1$ and $\la_{2\ell}$ has degree $-1$, so does $G_{2s}\la_{2s}^j = G_{2s}d_{2s}q_{2s}^j e_j$. Therefore,
\[
d(D_{r,i}^j) = n(m-1) - n(m-1) = 0.
\]
The $(j,i)$ entry of $T(a^{2\ell})$ is given by $\frac{D_{2\ell,i}^j}{D}$, and hence $d(T(a^{2\ell})) = 0$.

{\it Assume $r = 2\ell+1$.}

In this case, and always using (\ref{maineq}), the determinant
\[
D_{r,i}^j =\det(F\la, F_1\la_1,\dots,F_{2\ell}\la_{2\ell},F_{2\ell+1,i}^{j,\la},F_{2\ell+2}\la_{2\ell+2},\dots,F_{2s-1}\la_{2s-1}),
\]
can be further simplified replacing the $i$th column of $F_{2\ell+1,i}^{j,\la}$, given by $F_{2s} \la_{2s}^j$,  by
\[
h_{2s}^{j,\la} = G_{2s} \la_{2s}^j + F_{2\ell+1}\al_{2\ell+1}^{2s} \la_{2s}^j.
\]
We can then simplify the side of the determinant to the right of $F_{2\ell+1,i}^{j,\la}$ so that $F_{2r+1} \to \hG_{2r+1}$ and $F_{2r} \to G_{2r}+ f_{2\ell+1}^i\left(\al_{2\ell+1}^{2r}\right)_i$. After this simplification we can also substitute $f_{2\ell+1}^i$ with $\widehat{g}_{2\ell+1}^i$ and $F_{2\ell+1,i}^{j,\la}$  by $\hG_{2\ell+1,i}^{j,\la}$, where all columns of $\hG_{2\ell+1,i}^{j,\la}$ are equal to those of $\hG_{2\ell+1}\la_{2\ell+1}$, except for the $i$th column, given by
\[
g_{2s}^{j,\la} = G_{2s} \la_{2s}^j + \hG_{2\ell+1}\al_{2\ell+1}^{2s} \la_{2s}^j.
\]
We then continue to simplify the part of the determinant to the left of $\hG_{2\ell+1,i}^{j,\la}$. The result of all these simplifications is the determinant
\[
\det(F, \hG_1, G_2,\dots, G_{2\ell},\hG_{2\ell+1,i}^{j,\la},G_{2\ell+2}+\widehat{g}_{2\ell+1}^i\left(\al_{2\ell+1}^{2\ell+2}\right)_i, \hG_{2\ell+3},\dots, G_{2s-2}+\widehat{g}_{2\ell+1}^i\left(\al_{2\ell+1}^{2s-2}\right)_i, \hG_{2s-1}).
\]
First of all, notice that if we expand this determinant, the term in the expansion without any $g_{2\ell+1}^i$ is given by
\[
\det(F, \hG_1,\dots, G_{2\ell}, X^\la, G_{2\ell+2},\dots, \hG_{2s-1}),
\]
where $X^\la$ equals $\hG_{2\ell+1}\la_{2\ell+1}$ except for the $i$th column given by $\hG_{2\ell+1}\al_{2\ell+1}^{2s}\la_{2s}^j$. All columns of $\hG_{2\ell+1}\la_{2\ell+1}$ have degree zero, except for the $i$th column which has degree $1$ ($d(\hG_{2\ell+1}) = d(\al_{2\ell+1}^{2s}) = 1$ while $d(\la_{2s}^j) = -1$). Therefore, the degree of this determinant is $n(m-1) +1$.

If we now look at any of the terms in the expansion containing $g_{2\ell+1}^i$, we have
\[
\sum_{r=\ell+1}^{s-1}\sum_{p=1}^n\left(\al_{2\ell+1}^{2r}\right)_{i,p}\det(F, \hG_1, \dots, Y^\la, \dots, \hG_{2r-1}, G_{2r}^i+\widehat{g}_{2\ell+1}^i e_p^T, \hG_{2r-1}, \dots, G_{2s-2}, \hG_{2s-1}),
\]
 where $\left(\al_{2\ell+1}^{2r}\right)_{i,p}$ is the $(i,p)$ entry of $\al_{2\ell+1}^{2r}$, $G_{2r}^i$ has zero $i$th column and where $e_p$ is the standard canonical basis of $\R^n$.  The matrix $Y^\la$ is equal to $\hG_{2\ell+1}\la_{2\ell+1}$ except for the $i$th column, which is equal to $g_{2s}^{j,\la}$. We can further simplify $g_{2s}^{j,\la}$ to become $G_{2s} \la_{2s}^j$ since its odd term is generated by the other columns.

 Each column has degree $1$, except for those in $Y^\la$, which have degree $0$. Since each term is multiplied by  $\left(\al_{2\ell+1}^{2r}\right)_{i,p}$, of degree $1$, each term has degree $n(m-1)+1$, and so does the determinant. From here
 \[
 d(D_{2\ell+1,i}^j) = n(m-1)+1 - n(m-1) = 1.
 \]
 Since the $(j,i)$ entry of $T(a^{2\ell+1})$ is given by $\frac{D_{2\ell+1,i}^j}D$, we have
 \[
 d(T(a^{2\ell+1})) = 1,
 \]
 which concludes the proof of the theorem.
\end{proof}

This theorem allows us to define the Lax representation for the map $T$ on $\Pm_N$. Indeed, if we define
\[
Q_k(\mu) = \begin{pmatrix} O_n&O_n&\dots&O_n&a_k^0\\ I_n&O_n&\dots&O_n&\mu a_k^1\\O_n&I_n&\dots&O_n&a_k^2\\ \vdots&\ddots&\ddots&\ddots&\vdots\\ O_n&\dots&O_n&I_n&\mu a^{2s-1}_k\end{pmatrix},
\]
then there is a unique matrix $N_k(\mu)$ such that
\begin{equation}\label{Lax1}
T(Q_k(\mu)) = N_k(\mu)^{-1} \Lambda_k^{-1} Q_k(\mu) \Lambda_n N_k(\mu).
\end{equation}
The matrix $N_k(\mu)$, which is invariant and hence depends on $a_k^r$, is simply the matrix $N_k$ in (\ref{N}) rescaled by (\ref{scaling1}), that is
\[
N_k(\mu) = \mu(\rb_k, Q_k(\mu)\rb_{k+1}, Q_k(\mu)Q_{k+1}(\mu)\rb_{k+2},\dots, \left[Q_k(\mu)\dots Q_{k+2s-2}(\mu)\right]\rb_{k+2s-1}).
\]
(We can ignore the factor $\mu$ in front.) The system of equations
\[
T(\eta_k) = \eta_k N_k(\mu); \quad \quad \eta_{k+1} = \eta_k Q_k(\mu),
\]
has (\ref{Lax1}) as compatibility condition. Equation (\ref{Lax1}) must be independent from $\mu$ since $T$ is defined by its last column, and it is preserved by the scaling, while the rest of the entries are zero or $I_n$, and hence independent from $\mu$. Hence, this system is a standard {\it Lax representation for the Pentagram map on Grassmannian}, which can be used as usual to generate invariants of the map. Indeed, the conjugation class of the monodromy is preserved by the map $T:\Pm_N\to \Pm_N$, as we saw before. Since a representative of the class is given by $M = Q_0Q_1\dots Q_{N-1}$, we obtain the following theorem.
\begin{theorem}\label{Riemann}
The map $T:\Pm_N\to \Pm_N$ lies on the Riemann surface
\[
\det(Q_0(\mu)Q_1(\mu)\dots Q_{N-1}(\mu) - \eta I_{nm}) = 0
\]
with $\mu,\eta\in \CC$.
\end{theorem}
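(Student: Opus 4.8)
The plan is to read off the conclusion directly from the $\mu$-dependent Lax representation just constructed: I would show that the monodromy $M(\mu) = Q_0(\mu)Q_1(\mu)\dots Q_{N-1}(\mu)$ evolves under $T$ by conjugation, so that its characteristic polynomial is unchanged and the curve it cuts out in $(\mu,\eta)$-space is preserved along orbits.

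First I would rewrite the compatibility relation in a form that telescopes. Collecting the gauge factors $N_k(\mu)$ and $\Lambda_k$ of (\ref{Lax1}) (equivalently, of (\ref{compa}) now carrying the spectral parameter) into a single invertible matrix $G_k(\mu)$, the relation reads
\[
T(Q_k(\mu)) = G_k(\mu)^{-1}\, Q_k(\mu)\, G_{k+1}(\mu),
\]
and by Theorem \ref{lambdath} this is a genuine identity on $\Pm_N$ for every $\mu$, the map $T$ on its left being independent of $\mu$. The structural point I would stress is periodicity: for a twisted $N$-gon the invariants satisfy $a_{k+N}^i = a_k^i$, hence $Q_k(\mu)$, $N_k(\mu)$ and $\Lambda_k$ are all $N$-periodic in $k$, and therefore $G_N(\mu) = G_0(\mu)$.

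I would then telescope over one period. Since the pullback $T$ is multiplicative on matrices of invariants,
\[
T(M(\mu)) = \prod_{k=0}^{N-1} T(Q_k(\mu)) = \prod_{k=0}^{N-1} G_k(\mu)^{-1}Q_k(\mu)G_{k+1}(\mu) = G_0(\mu)^{-1}\, M(\mu)\, G_N(\mu) = G_0(\mu)^{-1}\, M(\mu)\, G_0(\mu),
\]
the interior factors cancelling and the periodicity $G_N(\mu)=G_0(\mu)$ closing the product into an honest conjugation. Because determinants are conjugation invariant,
\[
\det\big(T(M(\mu)) - \eta I_{nm}\big) = \det\big(G_0(\mu)^{-1}(M(\mu)-\eta I_{nm})G_0(\mu)\big) = \det\big(M(\mu)-\eta I_{nm}\big)
\]
for all $\mu,\eta\in\CC$. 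Hence every coefficient of this polynomial in $\mu$ and $\eta$ is a first integral of $T$, the spectral curve $\det(Q_0(\mu)\dots Q_{N-1}(\mu)-\eta I_{nm})=0$ is constant along orbits, and the map lies on it.

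The computation itself is short, which reflects that the genuine difficulty has already been discharged upstream: it is Theorem \ref{lambdath}, the scaling invariance, that allows the compatibility relation to be promoted to one carrying a spectral parameter while keeping the moduli map $T$ on its left independent of $\mu$. Within the present argument the only thing to watch is the periodicity $G_N(\mu)=G_0(\mu)$; had the $Q_k(\mu)$ failed to be $N$-periodic the telescoped product would close only up to an extra conjugation by the monodromy, which still preserves the spectrum but would require a brief separate remark. Since the invariants are $N$-periodic no such subtlety arises and the conjugation is exact.
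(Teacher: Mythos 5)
Your proposal is correct and is essentially the paper's own argument: the paper proves the theorem in one line by observing that the $\mu$-dependent compatibility relation (\ref{Lax1}) is independent of $\mu$ (thanks to the scaling invariance of Theorem \ref{lambdath}) and that it conjugates the monodromy representative $M(\mu)=Q_0(\mu)\cdots Q_{N-1}(\mu)$, so its characteristic polynomial is preserved. Your telescoping of $T(Q_k(\mu)) = G_k(\mu)^{-1}Q_k(\mu)G_{k+1}(\mu)$ over one period, using the $N$-periodicity of the invariants and of $\la_k$ to get $G_N(\mu)=G_0(\mu)$, is just the careful unpacking of that same remark.
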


\section{Pentagram map on $\Gr(n,(2s+1)n)$}

\subsection{Definition of the map}
In this section we will define the Pentagram map for the Grassmannian $\Gr(n,(2s+1)n)$ for $s\geq 1$. Recall that the dimension of
$\Gr(n,(2s+1)n)$ is $2sn^{2}$.

Assume that $\wp = (l_k)$ is a twisted polygon on $\Gr(n,(2s+1)n)$ and let $X_k$ be any twisted lift. Let $\Pi_k$ be the unique $n(s+1)$ linear subspace containing the following subspaces
\[
\Pi_k = \langle X_k, X_{k+2}, \dots , X_{k+2s}\rangle.
\]

We define the pentagram map to be the map $T$ taking the polygon $\wp$ to the unique twisted polygon (with the same monodromy) whose $k$th vertex has a lift given by the intersections $\Pi_k \cap \Pi_{k+1}$. This map can be defined either on the space of polygons, or on the vertices. 
The map $T$ is well defined and independent from the lift $X$. In fact, from the Grassmann formula we get
\begin{align}
(2s+1)n=\dim (\Pi_{k-1}+\Pi_{k})&=\dim \Pi_{k-1}+\dim \Pi_{k}-\dim (\Pi_{k-1}\cap \Pi_{k}) \nonumber \\
&=(s+1)n+(s+1)n-\dim (\Pi_{k-1}\cap \Pi_{k}) \nonumber \\
&=2(s+1)n-\dim (\Pi_{k-1}\cap \Pi_{k}), \nonumber
\end{align}
which shows that $\dim (\Pi_{k-1}\cap \Pi_{k})=n$ for any $k$, and hence $\Pi_k\cap \Pi_{k+1}$ is a lift of a unique element in $\Gr(n, (2s+1)n)$, an element equals to the $k$th vertex of $T(\wp)$. As before, we will abuse notation and use $T$ equally for the map on polygons, on their lifts, on frames or on the moduli space.

%

Clearly the pentagram map is invariant under the action of the projective group (linear on lifts), and therefore one is able to write it as a map on the moduli space of Grassmannian polygons, as represented by the invariants we found in section 3. This is what we do next.

Let us consider a twisted normalized lift $V=(V_{k})$ of a regular $N$-gon, $\wp = (l_k)$ as in theorem \ref{invariantsth}. Then, using dimension counting, there exist $2s+1$ squared $n\times n$ matrices $a_{k}^{0}, a_{k}^{1},\ldots, a_{k}^{2s-1}, a_{k}^{2s}$ such that
\begin{equation}\label{4}
V_{k+2s+1}=V_{k}a_{k}^{0}+V_{k+1}a_{k}^{1}+\cdots+V_{k+2s-1}a_{k}^{2s-1}+V_{k+2s}a_{k}^{2s}.
\end{equation}
The blocks will be normalized so that $a_k^0$ is diagonal or Toeplitz, and the entries of $a_{k}^{2s}$ have a number of syzygies that relate them.

If the lift is twisted, then  $a_{k}^{i}$ will be $N$-periodic for $i=0,1,2,\cdots, 2s$; that is
\begin{equation}\label{41}
a_{k+N}^i = a_k^i,
\end{equation}
for any $k$.

Since $T(V_k) \in \Pi_{k+1}$, we can assume that there exists $c_k^i$, $n\times n$ matrices such that
\begin{equation}\label{8}
T(V_{k})=V_{k+1}c_{k}^{1}+V_{k+3}c_{k}^{3}+\cdots+V_{k+2s-1}c_{k}^{2s-1}+V_{k+2s+1}c_{k}^{2s+1},
\end{equation}
for all $k\in \mathbb{Z}$.

On other hand, we can replace $V_{k+2s+1}$ in the last term of (\ref{8}) by
\begin{equation}\label{9}
V_{k+2s+1}=V_{k}a_{k}^{0}+V_{k+1}a_{k}^{1}+\cdots+V_{k+2s-1}a_{k}^{2s-1}+V_{k+2s}a_{k}^{2s}.
\end{equation}
It follows that
\begin{align}\label{09}
T(V_{k})&=V_{k+1}c_{k}^{1}+V_{k+3}c_{k}^{3}+\cdots+V_{k+2s-1}c_{k}^{2s-1}+ \nonumber \\
&\,\,\,\,\,\,+(V_{k}a_{k}^{0}+V_{k+1}a_{k}^{1}+\cdots+V_{k+2s-1}a_{k}^{2s-1}+V_{k+2s}a_{k}^{2s})c_{k}^{2s+1}  \\
&=V_{k}a_{k}^{0}c_{k}^{2s+1}+V_{k+1}[c_{k}^{1}+a_{k}^{1}c_{k}^{2s+1}]+V_{k+2}a_{k}^{2}c_{k}^{2s+1} \nonumber \\
&\,\,\,\,\,\,+V_{k+3}[c_{k}^{3}+a_{k}^{3}c_{k}^{2s+1}]+\cdots\cdots+V_{k+2s-2}a_{k}^{2s-2}c_{k}^{2s+1} \nonumber \\
&\,\,\,\,\,\,+V_{k+2s-1}[c_{k}^{2s-1}+a_{k}^{2s-1}c_{k}^{2s+1}]+V_{k+2s}a_{k}^{2s}c_{k}^{2s+1}. \nonumber
\end{align}
Since we also have $T(V_{k})\in \Pi_{k}$, it follows that $c_k^{2\ell+1} = -a_k^{2\ell+1} c_k^{2s+1}$ for any $\ell=0,\dots,s-1$, and

\begin{equation*}
T(V_{k})= \left[ V_k a_k^0 + V_{k+2} a_k^2 + \dots + V_{k+2s-2} a_k^{2s-2} + V_{k+2s} a_k^{2s}\right]c_{k}^{2s+1}.
\end{equation*}

\begin{remark}{\rm As in the previous case, the matrix invariant matrix $c_k^{2s+1}$ has no apparent restrictions, but in fact, it is completely determined. In order to be able to define the pentagram map on the moduli space coordinates given by the matrices $a_k^j$, we need to guarantee that $\{T(V_k)\}_{k=1}^N$ is the lift of $T(l_k)$ as described by theorem
\ref{invariantsth}, as far as $T(\wp)$ is generic. As we showed in theorem \ref{invariantsth}, there is such a unique lift, and $c_k^{2s+1}= \la_k$ will be the proportional matrix that appears in the theorem. In fact, we have not shown that if $\wp$ is regular, so is $T(\wp)$. As it was the case with the original pentagram map (\cite{Sch1}), the map is only generically defined.) 
}
\end{remark}

\begin{example}For $s=1$, that is, on $\Gr(n,3n)$ the Pentagram map is
\begin{equation*}
T(V_{k})=(V_{k}a^{0}_{k} + V_{k+2} a_k^2)\la_{k} = (V_k, V_{k+1}, V_{k+2})\begin{pmatrix} a_k^0\\ O_n\\ a_k^2\end{pmatrix} \la_k = \rho_k \rb_k \la_k,
\end{equation*}
for any $k\in \mathbb{Z}$.
\end{example}

Now, define
\begin{equation}\label{111}
\rho_{k}=(V_{k}\;V_{k+1}\;\ldots \;V_{k+2s-1}\;V_{k+2s}),
\end{equation}
so that for any $k\in \mathbb{Z}$
\[
T(V_{k})=\rho_{k}\begin{pmatrix}
                                                                      a^{0}_{k} \\
                                                                      O_{n} \\
                                                                      a^{2}_{k}\\ O_n\\
                                                                      \vdots \\
                                                                       O_n\\
                                                                       a_k^{2s} \\
                                                                    \end{pmatrix}\la_{k}.
\]
As before, if $\rho_{k+1} = \rho_k Q_k$, and
\[
\rb_k = \begin{pmatrix} a_k^0\\ O_n\\ a_k^2\\ \vdots\\ a_k^{2s-2}\\ O_n\\ a_k^{2s}\end{pmatrix},
\]
 and if we extend $T$ to $\rho_k$ by applying it to each block-column, we can write
\begin{equation}\label{112}
T(\rho_{k})=\rho_{k}\begin{pmatrix} \rb_k \la_k& R_{k+1} \rb_{k+1}\la_{k+1}& R_{k+2} \rb_{k+2}\la_{k+2}&\cdots& R_{k+2s}\rb_{k+2s}\la_{k+2s}\end{pmatrix},
\end{equation}
where, if $Q_k$ is given as in (\ref{K}), then $R_{k+i}= Q_{k}Q_{k+1}\cdots Q_{k+i-1}$, for $i = 1,2,\dots$.

Now, as we did before, for any $k\in \mathbb{Z}$  define
\begin{equation*}
N_{k}=\begin{pmatrix} \rb_k & R_{k+1} \rb_{k+1}& R_{k+2} \rb_{k+2}&\cdots& R_{k+2s}\rb_{k+2s}
            \end{pmatrix}.
\end{equation*}
It follows that
\begin{equation}\label{Laxfree}
T(\rho_{k})=\rho_{k}N_{k} \Lambda_k, \hskip 3ex \rho_{k+1} = \rho_k Q_k,
\end{equation}
where
\[
\Lambda_k = \begin{pmatrix} \la_k & O_n&\dots& O_n\\ O_n&\la_{k+1}&\dots&O_n\\ \ddots&\ddots&\ddots&\ddots\\ O_n&\dots& O_n&\la_{k+2s}\end{pmatrix}.
\]
As before,  the compatibility condition of these two natural maps is given by
\[
\Tc(T(\rho_{k}))=\Tc(\rho_{k}N_{k})=\rho_{k+1}N_{k+1}=T(\Tc(\rho_k)) = T(\rho_{k+1})=T(\rho_{k}Q_{k})=T(\rho_{k})T(Q_{k}).
\]
Hence
\begin{equation*}
\rho_{k+1}N_{k+1}\Lambda_{k+1}=T(\rho_{k})T(Q_{k})=\rho_{k}N_{k}\Lambda_kT(Q_{k}),
\end{equation*}
for all $k\in \mathbb{Z}$. It shows that (\ref{Laxfree}), together with
\begin{equation}\label{20}
T(Q_{k})=N_{k}^{-1}\Lambda_k^{-1}Q_{k}N_{k+1}\Lambda_{k+1},
\end{equation}
holds true for any $k\in \mathbb{Z}$ and describes a discrete, parameter free, Lax representation for the map $T$ defined on the moduli space as represented by the invariants that appear in the last column of $Q_k$. Notice that from (\ref{112}) we now that $\la_k$ are also periodic, that is
$\la_{k+N}=\la_{k}$  for any $k$. And from the definition in (\ref{K}) so are both $Q_k$ and $R_k$.

\subsection{A Lax representation for the pentagram map on $\Gr(n,(2s+1)n$}

As we did for the even dimensional case, in this section we will prove that one can introduce a parameter $\mu$ in (\ref{Laxfree}) in such a way that  (\ref{20}) will be independent from $\mu$. This will define a true Lax representation that can be used for integration of the map. As it was done in \cite{MB2}, we will prove that the map $T$ is invariant under a scaling, this time given by
\begin{equation}\label{scaling2}
  \quad a_k^{2r+1} \to \mu^{-1+r/s} a_k^{2r+1}, r=0,\dots s-1 \quad a_k^{2r} \to \mu^{r/s} a_k^{2r}, r=0,1,\dots, s.
 \end{equation}
We will follow the same steps as in the even dimensional case.  The first steps involve proving that the map defined without the proportional matrix $\la_k$ is invariant under the scaling. We will then calculate the degree of $\la_k$ using these results and incorporate the proportional matrices $\la_k$ to the map to finally calculate the degree of $T(a_k^i)$.

First of all, notice that if, as before, we denote by $F_{k+r}$ the $r+1$ block-column of $N_k$, the analogous to Lemma \ref{Mainlemma} still holds true. We cite it here without proof, since the proof is identical.

\begin{lemma}
Let $F_k = \rb_k$ and $F_{k+\ell} = R_{k+\ell}\rb_{k+\ell}$, $\ell=1,\dots$ as above. Then, there exist $n\times n$ matrices $\al_i^j$ such that
\begin{equation}\label{maineq2}
F_{k+2\ell} = \sum_{r=1}^\ell F_{k+2r-1}\al_{2r-1}^{2\ell} + G_{k+2\ell}, \quad F_{k+2\ell+1} = \sum_{r=0}^\ell F_{k+2r}\al_{2r}^{2\ell+1}+\hG_{k+2\ell+1},
\end{equation}
for $\ell\ge 1$, where
\begin{equation}\label{Godd2}
\hG_{k+2\ell+1} = \pb_k\left(\Tc G_{k+2\ell}\right)_{2s}+\Gamma \Tc G_{k+2\ell}, \quad \al_{2r}^{2\ell+1} = \Tc\al_{2r-1}^{2\ell}, \quad \al_0^{2\ell+1} =\left(\Tc G_{k+2\ell}\right)_{2s},
\end{equation}
with
\begin{equation}\label{p2}
\pb_k = \begin{pmatrix}O_n\\a_k^1\\ O_n\\ a_k^3\\ O_n\\ \dots\\ a_k^{2s-1}\\ O_n\end{pmatrix},
\end{equation}
 and
\begin{equation}\label{Geven2}
G_{k+2\ell+2} = \Gamma \Tc\hG_{k+2\ell+1}, \quad \al_{2r+1}^{2\ell+2} = \Tc\al_{2r}^{2\ell+1}, \quad G_k = F_k = \rb_k.
\end{equation}
By $A_{2s}$ we mean the last $n\times n$ block entry of a matrix $A$.
\end{lemma}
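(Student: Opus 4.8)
The plan is to prove (\ref{maineq2}) by induction on $\ell$, following verbatim the proof of Lemma \ref{Mainlemma}; as there, I would drop the subindex $k$ and write $F$, $F_\ell$, $Q$, $\rb$, $\pb$, $G_\ell$, $\hG_\ell$. The entire even-dimensional argument rests on three structural facts about $Q$ that make no reference to the internal parity pattern of $\rb$ and $\pb$, so I expect all three to transfer unchanged. First, reading off (\ref{K}), the last block-column of $Q$ is exactly $\pb+\rb$, with $\pb$ as in (\ref{p2}) and $\rb$ as in the odd-case definition; the only difference from the even case is that now $\rb$ is supported on the even block-slots and $\pb$ on the odd ones, the reverse of before, but this never enters the computation. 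Second, for any block vector $W=(W_0,\dots,W_{2s})^{\top}$ one has $QW=\Gamma W+(\pb+\rb)(W)_{2s}$, where $(W)_{2s}$ is the last block and $\Gamma$ is the down-shift (\ref{Gamma}). Third, $F_\ell=Q\Tc F_{\ell-1}$, directly from the definition of $F_{k+\ell}$, so each new column arises by applying $Q\Tc$ to the previous one.

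With these in hand I would first treat the base cases. From $G=F=\rb$ and the second fact, $F_1=Q\Tc F=Q\rb_1=F\,a_1^{2s}+\pb\,a_1^{2s}+\Gamma\rb_1$, which is $F\al_0^1+\hG_1$ once we set $\al_0^1=(\Tc G)_{2s}=a_1^{2s}$ and $\hG_1=\pb(\Tc G)_{2s}+\Gamma\Tc G$, matching (\ref{Godd2}). Because $\hG_1$ has vanishing last block, the second fact collapses $Q\Tc\hG_1$ to $\Gamma\Tc\hG_1=:G_2$, giving the first instance of (\ref{Geven2}).

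The inductive step splits into two passes, each mirroring the even case. Assuming $F_{2\ell}=\sum_{r=1}^{\ell}F_{2r-1}\al_{2r-1}^{2\ell}+G_{2\ell}$, I apply $Q\Tc$, use $Q\Tc F_{2r-1}=F_{2r}$, and expand $Q\Tc G_{2\ell}=\Gamma\Tc G_{2\ell}+(\pb+\rb)(\Tc G_{2\ell})_{2s}$ together with $\rb=F$; this produces the odd formula of (\ref{maineq2}) with $\al_{2r}^{2\ell+1}=\Tc\al_{2r-1}^{2\ell}$, $\al_0^{2\ell+1}=(\Tc G_{2\ell})_{2s}$ and $\hG_{2\ell+1}=\pb(\Tc G_{2\ell})_{2s}+\Gamma\Tc G_{2\ell}$. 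The reverse pass begins from the odd formula, applies $Q\Tc$, uses $Q\Tc F_{2r}=F_{2r+1}$ and the vanishing last block of $\hG_{2\ell+1}$ so that $Q\Tc\hG_{2\ell+1}=\Gamma\Tc\hG_{2\ell+1}$, and thereby delivers $G_{2\ell+2}=\Gamma\Tc\hG_{2\ell+1}$ with $\al_{2r+1}^{2\ell+2}=\Tc\al_{2r}^{2\ell+1}$, closing the induction.

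The one genuine checkpoint, and the single place where the parity reversal must be taken seriously, is the claim that every $\hG_{2\ell+1}$ has vanishing last block, since that is exactly what collapses $Q\Tc\hG$ to $\Gamma\Tc\hG$ at each step. I would confirm it by tracking supports: in the odd case $G_{2\ell}$ is supported on the even block-slots and $\hG_{2\ell+1}$ on the odd slots $\{1,3,\dots,2s-1\}$, so the last slot of $\hG_{2\ell+1}$, of index $2s$, is automatically $O_n$; the bottom block of $\pb$ is $O_n$ as well, and $\Gamma$ merely pushes the vanishing slot into the final position. This is the exact mirror of the even case, where $\hG$ occupied the even slots and its last slot carried the odd index $2s-1$. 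Since this support bookkeeping is the sole subtlety and goes through intact, the proof is word-for-word that of Lemma \ref{Mainlemma}, which is why it may be omitted.
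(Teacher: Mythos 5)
Your proof is correct and is essentially the paper's own argument: the paper omits the proof precisely because it is word-for-word that of Lemma \ref{Mainlemma}, which is the induction you reproduce via $F_\ell = Q\Tc F_{\ell-1}$ and the decomposition $QW = \Gamma W + (\pb+\rb)(W)_{2s}$. Your explicit verification that $\hG_{2\ell+1}$ is supported on the odd block-slots (so its last block, of even index $2s$, vanishes and $Q\Tc\hG_{2\ell+1}$ collapses to $\Gamma\Tc\hG_{2\ell+1}$) is exactly the parity point the paper leaves implicit, and it is handled correctly.
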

The main difference with the even dimensional case is that the even block-columns $G_{2\ell}$ start now with a non-zero block (while before it started with a zero one), and generate a $(s+1)n$ dimensional subspace, orthogonal to those generated by the odd ones $\hG_{2\ell+1}$, which start with a zero block and generate a $sn$ dimensional subspace. Also, in this case the first two blocks of $G_{2\ell}$ are zero since $G_{k+2\ell} = \Gamma \Tc\hG_{k+2\ell-1}$ and $\Gamma$ shifts all the blocks once downwards.

Assume first that we drop the $\Lambda_k$ factor and define
\[
\Tb(V_k) = \rho_k \rb_k.
\]
Define also $c_k^i = \Tb(a_k^i)$, which is given by the compatibility formula below, the analogous to (\ref{companol})
\[
\Tb(Q_k) = N_k^{-1} Q_k N_{k+1}.
\]
  Then
\[
\Tb(\ab_k) = N_k^{-1} Q_k \Tc F_{k+2s} = N_k^{-1} F_{k+2s+1},
\]
which can be written as
\begin{equation}\label{lineareq2}
 N_k \Tb(\ab_k) = F_{k+2s+1}.
\end{equation}
Once more { $\Tb(\ab_k)$ can be interpreted as the solution of the linear equation (\ref{lineareq2})}.

\begin{theorem}\label{theorem1}
The matrices $c_k^i$ are homogenous with respect to the scaling (\ref{scaling2}), and
\[
d(c_k^{2\ell}) = \frac{\ell}s, \quad d(c_k^{2\ell+1}) = -1+\frac\ell s.
\]
\end{theorem}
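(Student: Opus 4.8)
The plan is to follow the proof given for the even-dimensional case in Section 4 essentially verbatim in structure, using the preceding lemma (\ref{maineq2})--(\ref{Geven2}) in place of Lemma \ref{Mainlemma} and the linear system (\ref{lineareq2}) in place of (\ref{lineareq}), and then applying Cramer's rule. The one genuinely new feature is that the scaling (\ref{scaling2}) is not uniform: it gives $a_k^{2r}$ degree $r/s$ and $a_k^{2r+1}$ degree $-1+r/s$, so that $\rb_k$ and $\pb_k$ are no longer homogeneous as block columns. Consequently, in place of one degree per column as in Section 4, I will carry a separate degree for each $n\times n$ block-row.

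First I would prove, by induction on $\ell$ from the recursions (\ref{Godd2}) and (\ref{Geven2}), that the block-rows of $G_{k+2\ell}$ and $\hG_{k+2\ell+1}$ are homogeneous, the $2r$-th block-row of $G_{k+2\ell}$ and the $(2r+1)$-th block-row of $\hG_{k+2\ell+1}$ both having degree $(r-\ell)/s$. The base case $G_k=\rb_k$ is immediate, and the inductive step for (\ref{Godd2}) rests on the single identity $d(a_k^{2r+1})+d\big((\Tc G_{k+2\ell})_{2s}\big)=(-1+r/s)+(1-\ell/s)=(r-\ell)/s$, which forces the two contributions to $\hG_{k+2\ell+1}$ to share a common degree, while the step for (\ref{Geven2}) is a plain shift. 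The same recursion, started from the seed $\al_0^{2\ell+1}=(\Tc G_{k+2\ell})_{2s}$ of degree $1-\ell/s$, yields $d(\al_a^b)=1-(b-a-1)/(2s)$ for every $\al$ produced by the lemma; in particular $d(\al_{2\ell}^{2r+1})=1-(r-\ell)/s$ and $d(\al_{2\ell}^{2s+1})=\ell/s$. For the determinant bookkeeping it is then convenient to multiply on the left by the fixed weight $W=\mathrm{diag}(\mu^{-\lfloor j/2\rfloor/s}I_n)$, $j=0,\dots,2s$: each $WG_{k+2\ell}$ and $W\hG_{k+2\ell+1}$ becomes \emph{uniformly} homogeneous of degree $-\ell/s$, and since $\det W$ has fixed degree the degree of any Cramer determinant can now be read off column by column.

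From here I drop the subindex $k$, as in Section 4. Rewriting $D=\det N$ through (\ref{maineq2}) as $\det(G_0,\hG_1,G_2,\dots,G_{2s})$, the determinant splits across the even- and odd-position coordinate subspaces, and the column count gives $d(D)=0$. For the odd invariants, the Cramer determinant $D_{2\ell+1,i}^j$ replaces a column of the odd block column $F_{2\ell+1}$ by a column of $F_{2s+1}$, whose new homogeneous piece $\hG_{2s+1}$ lies in the \emph{same} (odd) coordinate subspace; no parity obstruction arises, the determinant still splits, and the weighted count gives $d(D_{2\ell+1,i}^j)=\ell/s-1$, whence $d(c^{2\ell+1})=-1+\ell/s$.

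The delicate case, which I expect to be the main obstacle, is $c^{2\ell}$, and here the roles are reversed with respect to Section 4: since the right-hand side $F_{2s+1}$ is \emph{odd}-indexed, its new homogeneous part $\hG_{2s+1}$ falls in the odd subspace while the replaced column sits in the even block column $F_{2\ell}$. Reducing $D_{2\ell,i}^j$ by (\ref{maineq2}) leaves, exactly as in the $D_{2\ell+1,i}^j$ computation of Section 4, a determinant in which one odd-position vector occupies an even-position slot; the term carrying no correction vanishes, because $sn+1$ odd-supported columns would then have to fill only $sn$ odd rows. Expanding along the surviving corrections $g_{2\ell}^i(\al_{2\ell}^{2r+1})_i$ extracts the scalars $(\al_{2\ell}^{2r+1})_{i,p}$ and reinstates the parity balance by inserting the even-position vector $g_{2\ell}^i$, producing balanced determinants whose degree still depends on $r$; adding $d(\al_{2\ell}^{2r+1})=1-(r-\ell)/s$ cancels this $r$-dependence and leaves every term at the common weighted degree $\ell/s-ns$, in agreement with the boundary contribution $(\al_{2\ell}^{2s+1})_{i,j}D$. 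This gives $d(D_{2\ell,i}^j)=\ell/s$ and therefore $d(c^{2\ell})=\ell/s$, completing the degree count. The bulk of the work, and the only place an error could hide, lies in this expansion: identifying which $\al$'s survive and verifying that the $r$-dependence genuinely cancels; everything else reduces to the degree arithmetic already performed in Section 4.
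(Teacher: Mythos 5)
Your proposal is correct and takes essentially the same route as the paper's proof: the same reduction of $D$, $D_{2\ell+1,i}^j$ and $D_{2\ell,i}^j$ to homogeneous columns via (\ref{maineq2}), the same parity-counting argument killing the correction-free term in the even case (your ``$sn+1$ odd-supported columns in $sn$ odd rows''), the same expansion extracting the scalars $\left(\al_{2\ell}^{2r+1}\right)_{i,p}$, and Cramer's rule at the end, with all your degree formulas (block-row degrees $(r-\ell)/s$, $d(\al_a^b)=1-(b-a-1)/(2s)$, hence $d(\al_{2\ell}^{2r+1})=1-(r-\ell)/s$ and $d(\al_{2\ell}^{2s+1})=\ell/s$) matching the paper's computation $d(A_{i,j})=\frac{i-j}{s}$. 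The only presentational difference is your fixed diagonal weight $W$ making every $WG_{2\ell}$, $W\hG_{2\ell+1}$ uniformly homogeneous of degree $-\ell/s$, where the paper instead normalizes each replacement column by an entrywise (Hadamard) quotient against a standard column of the same parity --- two equivalent bookkeeping devices for the non-uniform scaling (\ref{scaling2}).
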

Notice that the degree of $c_k^r$ coincides with that of $a_k^r$. Later we will show that $\la_k$ are all invariant under scaling and this theorem will essentially prove the invariance of the map under (\ref{scaling2}).

 Let us once more drop the subindex $k$ unless needed.
As we did in the previous case, we will work with determinants of the form $D_{r,i}^j$, simplifing them down to their homogeneous component, and calculating their degree. Because they are solution of (\ref{lineareq2}), each entry of $c^r$ will be a quotient of these determinants and $D$, and this way we will be able to determine their degree. The study comes in a number of lemmas.

\begin{lemma} The determinant
\[
D= \det (F, F_1, \dots, F_{2s}),
\]
is invariant under (\ref{scaling2}).
\end{lemma}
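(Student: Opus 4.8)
The plan is to follow the template of the even-dimensional argument from Section~4, using the decomposition lemma to replace the block columns $F_r$ by the pieces $G_r,\hG_r$, and then to bookkeep scaling degrees. Since column operations leave a determinant unchanged, repeated use of (\ref{maineq2}) — subtracting from each $F_{k+2\ell}$ the already-present odd columns $F_{k+1},\dots,F_{k+2\ell-1}$, and from each $F_{k+2\ell+1}$ the even columns $F_k,\dots,F_{k+2\ell}$ — lets me rewrite
\[
D=\det(F,F_1,\dots,F_{2s})=\det(G_0,\hG_1,G_2,\hG_3,\dots,G_{2s}),
\]
where $G_0=F=\rb_k$. It therefore suffices to analyze the scaling behaviour of this reduced matrix.

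The essential new feature compared with the even case is that under (\ref{scaling2}) the columns $G_{2\ell}$ and $\hG_{2\ell+1}$ are \emph{not} homogeneous as a whole: their entries carry degrees that vary from one block row to the next. I would therefore track a finer, bigraded degree, claiming that every generically nonzero block in block-row $i$ and column position $p$ of the reduced matrix is homogeneous with $d=\delta_i+\gamma_p$, where $\delta_i=\lfloor i/2\rfloor/s$ (so $\delta_{2r}=\delta_{2r+1}=r/s$) and $\gamma_{2\ell}=\gamma_{2\ell+1}=-\ell/s$. Recall that $G_{2\ell}$ is supported on even block rows and $\hG_{2\ell+1}$ on odd block rows, so on each column only one parity of rows occurs. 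The base case is $G_0=\rb_k$, whose block-row $2r$ equals $a_k^{2r}$ with $d(a_k^{2r})=r/s=\delta_{2r}$, forcing $\gamma_0=0$. The inductive step uses the recursions (\ref{Godd2})--(\ref{Geven2}): $\Tc$ preserves degree, while $\Gamma$ merely shifts content down one block row, carrying even-row data into odd rows (odd step) or odd-row data into even rows (even step). Matching the target profile $\delta_i$ then forces $\gamma_{2\ell+1}=\gamma_{2\ell}$ and $\gamma_{2\ell+2}=\gamma_{2\ell+1}-1/s$, which solves to the stated values.

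The step I expect to be the crux is checking that, in the odd recursion $\hG_{k+2\ell+1}=\pb_k(\Tc G_{k+2\ell})_{2s}+\Gamma\Tc G_{k+2\ell}$, the two summands land with the \emph{same} block-row degree, so that $\hG_{2\ell+1}$ really is bihomogeneous. In block-row $2r+1$ the shift term $\Gamma\Tc G_{k+2\ell}$ contributes degree $\delta_{2r}+\gamma_{2\ell}=r/s+\gamma_{2\ell}$, while the $\pb_k$-term contributes $d(a_k^{2r+1})+d\big((\Tc G_{k+2\ell})_{2s}\big)=(-1+r/s)+(1+\gamma_{2\ell})$, and these coincide precisely because $d(a_k^{2r+1})=\delta_{2r+1}-1$ together with $\delta_{2s}=1$ for the last block row. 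This matching is exactly the point at which the fractional exponents of (\ref{scaling2}) are needed; it is the analogue of the homogeneity that was automatic in the even case, and it is the only genuinely delicate part of the argument.

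The bigrading then yields a factorization of the reduced matrix as $\mathrm{diag}(\mu^{\delta_i}I_n)\,W_0\,\mathrm{diag}(\mu^{\gamma_p}I_n)$ with $W_0$ scaling-invariant, whence
\[
D(\mu)=\mu^{\,n\left(\sum_i\delta_i+\sum_p\gamma_p\right)}D(1).
\]
A short computation gives $\sum_{i=0}^{2s}\delta_i=\tfrac{2}{s}(0+1+\dots+(s-1))+1=s$ and, symmetrically, $\sum_{p=0}^{2s}\gamma_p=-s$, so the exponent vanishes and $D$ is invariant under (\ref{scaling2}), as claimed. The determinant arithmetic at the end is routine; all the real work is in the bigraded bookkeeping and the degree-matching of the previous paragraph.
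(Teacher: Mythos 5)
Your proposal is correct and follows essentially the same route as the paper: both reduce $D$ to $\det(F,\hG_1,G_2,\dots,G_{2s})$ via the decomposition lemma and run the same inductive degree count, your bigrading $\delta_i+\gamma_p$ being exactly the paper's $d(A_{i,j})=\frac{i-j}{s}$, and the crux you flag (the two summands of $\hG_{2\ell+1}=\pb_k(\Tc G_{2\ell})_{2s}+\Gamma\Tc G_{2\ell}$ matching in degree) being precisely the paper's induction step. The only cosmetic difference is the endgame: you factor out $\mathrm{diag}(\mu^{\delta_i}I_n)$ and $\mathrm{diag}(\mu^{\gamma_p}I_n)$ and sum the exponents, whereas the paper permutes rows and columns to obtain $D=\det A\,\det(\Tc A)$ and computes $n\sum_{i,j}\frac{i-j}{s}=0$.
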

\begin{proof}
Using (\ref{maineq2}) we have that
\[
D = \det(F, \hG_1, G_2, \dots, \hG_{2s-1}, G_{2s}).
\]
This time all $G_{2r} = \Gamma \Tc \hG_{2r-1}$ have the first two blocks equal $O_n$; therefore, the blocks in the first row of $D$ are all zero, except for the first block which is the first block of $F = \rb$, i.e. $a^0$, and which is invariant under scaling. If we simplify using the first $n$ rows, we have the determinant of a matrix that
looks like
\[
\begin{pmatrix} A_{1,1} & O_n & A_{1,2}& O_n &\dots & A_{1,s}&O_n\\ O_n & B_{1,1} & O_n&B_{1,2}&\dots&O_n&B_{1,s}\\ A_{2,1} & O_n & A_{2,2}& O_n &\dots & A_{2,s}&O_n\\ O_n & B_{2,1} & O_n&B_{2,2}&\dots&O_n&B_{2,s}\\ \vdots&\dots&\dots&\dots&\dots&\dots&\dots\\ A_{s,1} & O_n & A_{s,2}& O_n &\dots & A_{s,s}&O_n\\O_n & B_{s,1} & O_n&B_{s,2}&\dots&O_n&B_{s,s}\end{pmatrix},
\]
where $A_{i,j}$ are the nonzero blocks of $\hG_{2j-1}$ and $B_{i,j}$ are the nonzero blocks of $G_{2j}$. Using $n^2 s (s-1)$ exchanges of rows and columns, this determinant can be easily transformed into
\[
\det\begin{pmatrix} A & O_{sn}\\ O_{sn}& B\end{pmatrix},
\]
with $A = (A_{i,j})$ and $B = (B_{i,j})$. Also, since $G_{2\ell} = \Gamma\Tc\hG_{2\ell-1}$, we have that $ B = \Tc A$ and $D = \det A\det \Tc A$.

We will next show that $d(A_{i,j}) = \frac {i-j}s$. This will imply, from the definition of determinant, that $d(\det A) = n \sum_{i=1}^s\sum_{j=1}^s \frac{i-j}s = 0$, concluding the proof.

Indeed, from (\ref{Godd2}) we have
\begin{equation*}
\hG_{2\ell+1} = \pb\left(\Tc G_{2\ell}\right)_{2s}+\Gamma \Tc G_{2\ell} = \pb\left(\Tc^2 \hG_{2\ell-1}\right)_{2s-1}+\Gamma^2 \Tc^2 \hG_{2\ell-1},
\end{equation*}
and using this we conclude that
\begin{equation}\label{A}
A_{k,\ell+1} = a^{2k-1}\Tc A_{s,\ell} + \Tc^2 A_{k-1,\ell}.
\end{equation}

The degrees of the nonzero blocks of $\pb$ are given by
\[
-1,\quad -1+\frac 1s,\quad -1+\frac 2s,\quad\dots, \quad -1 + \frac{s-1}s,
\]
 while the degree of $a^{2s}$ is $1$. Thus,
 \[
 d(A_{k,1}) = d(a^{2k-1} a^{2s}) = d(\Tc a^{2k-2}) = \frac {k-1}s.
 \]

We now use induction. Assume that $d(A_{i,j}) = \frac{i-j}s$ for all $i=1,\dots s$ and all $j<\ell$. From (\ref{A}) we have that, since $d(a^{2k-1}) = -1+\frac {k-1}s$,
\[
d(a^{2k-1} \Tc^2 A_{s,\ell}) = -1 +\frac{k-1}s+\frac{s-\ell}s = \frac{k-\ell-1}s; \quad d(\Tc^2A_{k-1,\ell}) = \frac{k-1-\ell}s,
\]
and so
\[
d(A_{k, \ell+1}) = \frac {k-(\ell+1)}s,
\]
concluding the proof of the lemma.
\end{proof}
Let us denote by $D_{r,i}^j$ the determinant given by
\[
D_{r,i}^j = \det( F, F_1, \dots, F_{r-1}, F_{r,i}^j, F_{r+1}, \dots, F_{2s}),
\]
where $F_{r,i}^j$ is the block-column obtained from $F_r$ by substituting the $i$th column, $f_r^i$, with the $j$th column of $F_{2s+1}$, $f_{2s+1}^j$.
\begin{lemma} Determinant $D_{2\ell+1,i}^j$ is homogeneous for all $i,j=1,\dots n$, and $\ell = 0,\dots, s-1$, and
\[
d(D_{2\ell+1,i}^j) = -1+\frac \ell s.
\]
\end{lemma}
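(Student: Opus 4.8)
The plan is to mirror the odd-index computation from the even-dimensional theorem, but to exploit a parity coincidence that makes the present case behave like the \emph{clean} (even-index) computation there rather than the one requiring a full expansion into many determinants. First I would write
\[
D_{2\ell+1,i}^j = \det(F,F_1,\dots,F_{2\ell},F_{2\ell+1,i}^j,F_{2\ell+2},\dots,F_{2s})
\]
and rewrite the substituted column $f_{2s+1}^j$ using (\ref{maineq2}). Since $2s+1$ is odd, $F_{2s+1}=\sum_{r=0}^s F_{2r}\al_{2r}^{2s+1}+\hG_{2s+1}$ is a combination of the \emph{even} block-columns plus the odd block $\hG_{2s+1}$. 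All even blocks $F_0,F_2,\dots,F_{2s}$ are present in $D_{2\ell+1,i}^j$, since only the odd block $F_{2\ell+1}$ was altered, so these even columns can be subtracted off by column operations, reducing the substituted column to $\widehat g_{2s+1}^j$. The key point, and the reason this case is clean, is that \emph{no} odd column occurs in the expansion of $F_{2s+1}$; hence the missing column $f_{2\ell+1}^i$ produces no residual term, in contrast with the even-dimensional $c^{2\ell+1}$ computation where the inserted vector had the wrong parity.

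Next I would reduce every remaining block to its $G$/$\hG$ form by the usual column operations of (\ref{maineq2}): odd blocks $F_{2q-1}\to\hG_{2q-1}$ and even blocks $F_{2r}\to G_{2r}$, and likewise turn the non-$i$ columns of the altered block into $\widehat g_{2\ell+1}^p$. Simplifying the even blocks $F_{2r}$ with $r>\ell$ leaves residual multiples of the now-absent column $f_{2\ell+1}^i$, which I convert to $\widehat g_{2\ell+1}^i$ using the present even columns, exactly as in the clean even-dimensional step. Since $\widehat g_{2\ell+1}^i$ lies in the $sn$-dimensional odd subspace, and the remaining odd columns together with $\widehat g_{2s+1}^j$ generically span that whole subspace, these residuals can be cleared using the odd columns. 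The outcome is the clean determinant
\[
D_{2\ell+1,i}^j=\pm\det(F,\hG_1,G_2,\dots,G_{2\ell},\hG_{2\ell+1,i}^j,G_{2\ell+2},\dots,\hG_{2s-1},G_{2s}),
\]
where $\hG_{2\ell+1,i}^j$ is $\hG_{2\ell+1}$ with its $i$th column replaced by $\widehat g_{2s+1}^j$, the hatted and unhatted blocks alternating. Because the replacement stays within the odd parity, the even/odd splitting used in the previous lemma is preserved and no expansion is needed.

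I would then compute the degree exactly as in the lemma on $D$: factoring out the invariant top-left block $a^0$ (degree $0$) and performing the same $n^2s(s-1)$ row and column exchanges splits the determinant as $\pm\det(a^0)\,\det A'\,\det B$, where $B=\Tc A$ is unchanged and so $d(\det B)=0$, and $A'$ is the odd $s\times s$ block matrix $A=(A_{i,j})$, $d(A_{i,j})=\tfrac{i-j}s$, with the single scalar-column of block-column $\ell+1$ carrying $\widehat g_{2s+1}^j$ now assigned the block-degrees of $\hG_{2s+1}$, namely $d(A_{i,s+1})=\tfrac{i-(s+1)}s$ (extending the recursion (\ref{A}) one step). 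In the Leibniz expansion of $\det A'$ each term selects one entry per block-row and per block-column; the block-row indices contribute $n\sum_{i=1}^s i$, while the block-column weights contribute $n\sum_{j=1}^s j+(s-\ell)$, the extra $s-\ell=(s+1)-(\ell+1)$ coming from the one column whose weight jumped from $\ell+1$ to $s+1$. Hence every term has degree $\tfrac1s\big(n\sum i-n\sum j-(s-\ell)\big)=\tfrac{\ell-s}s$, independent of the permutation, so $\det A'$ is homogeneous of degree $-1+\tfrac\ell s$, and therefore $d(D_{2\ell+1,i}^j)=-1+\tfrac\ell s$.

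The main obstacle is the degree bookkeeping under the non-uniform block weights $d(A_{i,j})=(i-j)/s$: unlike the even-dimensional case the individual columns are not themselves homogeneous, so homogeneity must be read off from the telescoping of block-row against block-column indices, and one must verify that replacing a \emph{single} scalar column rather than a whole block shifts every Leibniz term by the same amount $(s-\ell)/s$. The only other delicate point is justifying the clean reduction, namely confirming that the parity match removes all residuals and that the generic spanning of the odd subspace lets one clear the leftover $\widehat g_{2\ell+1}^i$ terms, so that no expansion into a sum of determinants is required here.
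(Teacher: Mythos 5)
Your proposal is correct, and it splits naturally into two halves relative to the paper's proof. The first half coincides with the paper: you replace $f_{2s+1}^j$ by $\hg_{2s+1}^j$ using the parity observation that the expansion of $F_{2s+1}$ in (\ref{maineq2}) contains only even blocks, all of which are present, and you arrive at the same clean determinant $\det(F,\hG_1,\dots,G_{2\ell},\hG_{2\ell+1,i}^j,G_{2\ell+2},\dots,G_{2s})$. The only difference there is how the residual $\hg_{2\ell+1}^i$-terms are removed: the paper expands multilinearly and notes that any term containing an extra odd column has $sn+1$ columns in the $sn$-dimensional odd subspace, hence vanishes identically (no genericity needed), whereas you clear them by column operations after observing that the $sn$ odd columns present generically span that subspace --- equivalent in this generic setting, though the paper's dimension count is marginally cleaner. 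Where you genuinely diverge is the degree computation. The paper writes $\hg_{2s+1}^j = \hg_{2\ell+1}^i\otimes(\hg_{2\ell+1}^i)^{-1}\otimes\hg_{2s+1}^j$, observes that the entrywise ratio is homogeneous with all entries of degree $-1+\frac{\ell}{s}$, and expands along that column so that the already-established invariance of $D$ yields the result at once. You instead rerun the checkerboard reduction from the lemma on $D$, splitting the determinant as $\pm\det(a^0)\det A'\det(\Tc A)$, assign the single replaced scalar column the block-degrees $\frac{i-(s+1)}{s}$ (a legitimate one-step extension of recursion (\ref{A}), consistent with the degree vector $d(\hg_{2s+1}^j)$ the paper lists), and verify that every Leibniz term of $\det A'$ telescopes to the same degree $\frac{1}{s}\bigl(n\sum i - n\sum j - (s-\ell)\bigr) = -1+\frac{\ell}{s}$. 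Your route is longer but self-contained, establishes term-by-term homogeneity explicitly despite the individual columns not being homogeneous, and avoids dividing by the entries of $\hg_{2\ell+1}^i$ (the paper's Hadamard-ratio trick tacitly assumes these entries are nonzero, one more generic condition); the paper's route is shorter because it piggybacks the homogeneity of $D_{2\ell+1,i}^j$ wholesale on that of $D$.
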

\begin{proof}
As in the previous cases we will make heavy use of (\ref{maineq2}) to reduce the determinants to their homogeneous components before calculating their degree. First of all we will substitute the $i$th column of $ F_{2\ell+1,i}^j$ in
\[
\det( F, F_1, \dots, F_{2\ell}, F_{2\ell+1,i}^j, F_{2\ell+2}, \dots, F_{2s}),
\]
by that of
\(
\hG_{2s+1} \), $\hg_{2s+1}^j$, since the difference is a combination of even columns, which are all present in the determinant. We can then simplify the columns to the right of this and substitute them by either $\hG_{2r+1}$ or by
\[
G_{2r}+\hg_{2\ell+1}^i\left(\al_{2\ell+1}^{2r}\right)_i,
\]
where a super index indicate the column, a subindex the row. We can then substitute all the remaining columns of $F_{2\ell+1,i}^j$ and $F_r$, $r\le 2\ell$ by those of $G_r$ or $\hG_r$, depending on parity, to obtain
\[
D_{r,i}^j = \det( F, \hG_1, \dots, G_{2\ell}, \hG_{2\ell+1,i}^j, G_{2\ell+2}+\hg_{2\ell+1}^i\left(\al_{2\ell+1}^{2\ell+2}\right)_i,\hG_{2\ell+3}, \dots, G_{2s}+ \hg_{2\ell+1}^i\left(\al_{2\ell+1}^{2s}\right)_i),
\]
where $\hG_{2\ell+1,i}^j$ is equal to $\hG_{2\ell+1}$ except for its $i$th column which has been substituted by the $j$th column of $\hG_{2s+1}$. Since the missing column in $\hG_{2\ell+1}$ has been substituted by another odd column, if we expand this determinant all the terms that include any extra column $\hg_{2\ell+1}^i$ will vanish since we already have odd columns equal to half the dimension in their standard position. Therefore
\[
D_{r,i}^j = \det( F, \hG_1, \dots, G_{2\ell}, \hG_{2\ell+1,i}^j, G_{2\ell+2},\hG_{2\ell+3}, \dots, G_{2s}).
\]
Consider now the following product
\begin{equation}\label{product1}
\hg_{2s+1}^{j} = \hg_{2\ell+1}^i\otimes (\hg_{2\ell+1}^i)^{-1}\otimes \hg_{2s+1}^{j},
\end{equation}
where by $(\hg_{2\ell+1}^i)^{-1}$ we mean the vector whose entries are the inverses of the entries of $\hg_{2\ell+1}^{i}$, and where the product represents the individual entries product, as customary. We claim that $(\hg_{2\ell+1}^i)^{-1}\otimes \hg_{2s+1}^{j}$ is homogeneous, each entry with the same degree. Indeed, using the fact that $d(A_{i,j}) = \frac {i-j}s$ (recall that $A_{i,j}$, $i=1,\dots s$, are the nonzero blocks of $\hG_{2j-1}$, $j=1,\dots, s$),  we can see that
\begin{eqnarray*}
d((\hg_{2\ell+1}^i)^{-1}\otimes \hg_{2s+1}^{j}) &=& d(\hg_{2s+1}^{j}) - d((\hg_{2\ell+1}^i)\\ &=&
(\ast,-1 , \ast,-1+\frac1s , \ast, \dots, \ast,-1+\frac{s-1}s , \ast) \\&-& (\ast,-\frac\ell s , \ast,\frac{1-\ell}s , \ast, \dots, \ast,\frac{s-1-\ell}s , \ast)\\
&=&
(\ast,-1+\frac \ell s , \ast,-1+\frac \ell s , \ast, \dots, \ast,-1+\frac \ell s , \ast),
\end{eqnarray*}
where $\ast$ indicates the position of a zero block. Therefore, substituting $\hg_{2s+1}^{j}$ by (\ref{product1}) in the determinant, expanding the determinant using this column, and using the fact that $D$ is invariant under scaling gives us
\[
d(D_{2\ell+1,i}^j) = -1 + \frac \ell s,
\]
as claimed.
\end{proof}

\begin{lemma}

Determinant $D_{2\ell,i}^j$ is also homogeneous for all $i,j=1,\dots,n$, $\ell=0,\dots,s-1$ and
\[
d(D_{2\ell,i}^j) = \frac {\ell}s.
\]

\end{lemma}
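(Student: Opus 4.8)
The plan is to follow the template of the preceding lemma on $D_{2\ell+1,i}^j$: first rewrite $D_{2\ell,i}^j$ in the homogeneous basis furnished by (\ref{maineq2}), then track degrees. I expect the even case to collapse to a cleaner scalar factorization than the odd one. Concretely, I would begin by using (\ref{maineq2}) to replace, by determinant-preserving column operations, each block $F_r$ with $r\neq 2\ell$ in $D_{2\ell,i}^j$ by its homogeneous representative $G_r$ or $\hG_r$, and each unaltered column of the modified block $F_{2\ell,i}^j$ by the corresponding column of $G_{2\ell}$. The only column requiring care is the $i$th column of $F_{2\ell,i}^j$, namely the $j$th column $f_{2s+1}^j$ of $F_{2s+1}$.

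The key step is to reduce $f_{2s+1}^j$ modulo the columns already present in the determinant. Taking the $j$th column of the matrix identity (\ref{maineq2}) gives exactly
\[
f_{2s+1}^j=\hg_{2s+1}^j+\sum_{r=0}^s\sum_{p=1}^n f_{2r}^p\,(\al_{2r}^{2s+1})_{p,j}.
\]
The odd part $\hg_{2s+1}^j$ lies in the $sn$-dimensional odd subspace, which is already fully spanned by the present blocks $\hG_1,\dots,\hG_{2s-1}$, and so is absorbed. After reducing each $f_{2r}^p$ to $g_{2r}^p$ modulo odd columns, every block $G_{2r}$ with $r\neq\ell$ is present in full, and $G_{2\ell}$ is present except for its $i$th column $g_{2\ell}^i$; hence the only unabsorbed contribution is $(\al_{2\ell}^{2s+1})_{i,j}\,g_{2\ell}^i$. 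Reinserting $g_{2\ell}^i$ into the $i$th slot restores the full block $G_{2\ell}$, so that
\[
D_{2\ell,i}^j=(\al_{2\ell}^{2s+1})_{i,j}\,\det(F,\hG_1,G_2,\dots,G_{2\ell},\dots,\hG_{2s-1},G_{2s})=(\al_{2\ell}^{2s+1})_{i,j}\,D.
\]
This is the essential simplification, and it is cleaner than the odd case precisely because the surviving contribution is a scalar multiple of the removed column, so no entrywise-ratio trick (as in (\ref{product1})) is needed.

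It then remains to compute $d(\al_{2\ell}^{2s+1})$ and invoke the invariance of $D$. Using the recursions $\al_{2r}^{2\ell'+1}=\Tc\al_{2r-1}^{2\ell'}$ and $\al_{2r+1}^{2\ell'+2}=\Tc\al_{2r}^{2\ell'+1}$ from (\ref{Godd2})--(\ref{Geven2}), I would peel off two units from the upper index and two from the lower index at each step, each contributing one shift, obtaining $\al_{2\ell}^{2s+1}=\Tc^{2\ell}\al_0^{2(s-\ell)+1}$. Since $\al_0^{2(s-\ell)+1}=(\Tc G_{2(s-\ell)})_{2s}$ is the last block of $\Tc G_{2(s-\ell)}$, this equals $\Tc^{2\ell+1}B_{s,s-\ell}$, where $B_{i,j}$ is the $i$th nonzero block of $G_{2j}$ as in the invariance lemma. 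That lemma gives $d(B_{i,j})=\frac{i-j}{s}$, so $d(\al_{2\ell}^{2s+1})=d(B_{s,s-\ell})=\frac{s-(s-\ell)}{s}=\frac\ell s$, uniformly in $i,j$; combined with $d(D)=0$ this yields $d(D_{2\ell,i}^j)=\frac\ell s$. The case $\ell=0$ is the base of the recursion, giving $\al_0^{2s+1}=\Tc B_{s,s}$ of degree $0$.

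The main obstacle I anticipate is making the absorption argument in the second step fully rigorous: one must verify that, generically, the columns $\{g_{2r}^p\}_{(r,p)\neq(\ell,i)}$ together with $g_{2\ell}^i$ form a basis of the $(s+1)n$-dimensional even subspace (so that the coefficient of $g_{2\ell}^i$ is unambiguously $(\al_{2\ell}^{2s+1})_{i,j}$), and that after converting $F_{2r}$ to $G_{2r}$ the discarded odd tails genuinely lie in the span of the present odd blocks. A secondary, purely bookkeeping point is keeping the index ranges of the $\al$-recursion correct down to the base case and confirming that the last nonzero block of $G_{2(s-\ell)}$ is indeed $B_{s,s-\ell}$, i.e. that for $j=1,\dots,s$ the block $G_{2j}=\Gamma\Tc\hG_{2j-1}$ has its $s$ nonzero blocks at positions $3,5,\dots,2s+1$.
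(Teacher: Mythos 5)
Your reduction breaks down at the absorption step, and the clean factorization $D_{2\ell,i}^j=(\al_{2\ell}^{2s+1})_{i,j}\,D$ you derive from it is generically false. The point you miss is that the block $F_{2\ell}$ enters the determinant with its $i$th column removed, so when you reduce the odd blocks to its right via (\ref{maineq2}), the subtraction $F_{2r+1}-\sum_{r'\le r}F_{2r'}\al_{2r'}^{2r+1}$ cannot be carried out completely: it leaves a rank-one residue, and what actually sits in the determinant is $\hG_{2r+1}+g_{2\ell}^i\bigl(\al_{2\ell}^{2r+1}\bigr)_i$ for every $r\ge\ell$, not the pure block $\hG_{2r+1}$. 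Consequently the odd columns available for absorbing $\hg_{2s+1}^j$ are polluted by even components proportional to $g_{2\ell}^i$; column operations that cancel the odd part of $\hg_{2s+1}^j$ necessarily reintroduce multiples of $g_{2\ell}^i$ whose coefficients are ratios of determinants, not the entry $(\al_{2\ell}^{2s+1})_{i,j}$. The obstacle you anticipated (generic spanning of the even subspace) is not the real one — spanning holds generically; what fails is the purity of the odd columns you propose to absorb against.

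The paper's proof keeps exactly the terms you discard: expanding along the modified column $\hg_{2s+1}^j+g_{2\ell}^i(\al_{2\ell}^{2s+1})_{i,j}$ gives $D_{2\ell,i}^j=(\al_{2\ell}^{2s+1})_{i,j}D$ \emph{plus} a second determinant, whose multilinear expansion in the residues survives precisely when exactly one perturbed odd block contributes its even part (two such choices give parallel columns, both multiples of $g_{2\ell}^i$; zero such choices give $sn+1$ odd columns, hence zero). This yields the sum (\ref{expansion}), which is generically nonzero. Each of its terms carries the odd column $\hg_{2s+1}^j$ inside an odd block, and its degree cannot be read off blockwise: one needs the entrywise-ratio substitution $\hg_{2s+1}^j=\hg_{2s+1}^j\otimes(\hg_{2r+1}^p)^{-1}\otimes\hg_{2r+1}^p$ — the device of (\ref{product1}) that you declared unnecessary — to see that every term has degree $-1+\frac rs+d\bigl((\al_{2\ell}^{2r+1})_{i,p}\bigr)=\frac{\ell}{s}$, matching the degree of $(\al_{2\ell}^{2s+1})_{i,j}D$. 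Your degree bookkeeping for that scalar factor (via $\al_{2\ell}^{2s+1}=\Tc^{2\ell}\al_0^{2(s-\ell)+1}$ and $d(B_{s,s-\ell})=\frac{\ell}{s}$) is correct and consistent with the paper's formula $d(\al_{2\ell}^{2r+1})=1-\frac{r-\ell}{s}$, but it accounts for only one of the contributions. Contrary to your expectation, the even case is the messier of the two: it is in the odd case $D_{2\ell+1,i}^j$ that the residues drop out of the expansion and a single determinant remains.
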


\begin{proof}
The proof of this case is a bit more complicated. We need to look at the determinant
\[
D_{2\ell,i}^j = \det( F, F_1, \dots, F_{2\ell-1}, F_{2\ell,i}^j, F_{2\ell+1}, \dots, F_{2s}).
\]
As before, we simplify the $i$th column of $F_{2\ell,i}^j$, using (\ref{maineq2}). That is, $f_{2s+1}^j$ will be substituted by
\begin{equation}\label{col1}
\hg_{2s+1}^j + f_{2\ell}^i\left(\al_{2\ell}^{2s+1}\right)_{i,j}.
\end{equation}
We can then substitute the columns to the right of this one: we substitute $F_{2r+1}$ by $\hG_{2r+1}+f_{2\ell}^i\left(\al_{2\ell}^{2r+1}\right)_{i}$ and $F_{2r}$ by $G_{2r}$. We can also substitute $f_{2\ell}^i$ in the expression of $F_{2r+1}$ by $g_{2\ell}^i$. We then change the remaining columns in $F_{2\ell,i}^j$, including $f_{2\ell}^i$, by those of $G_{2\ell}$ (we call the resulting matrix $G_{2\ell,i}^j$), and substitute the $F$'s blocks to the left of  $G_{2\ell,i}^j$ by $G$-blocks. Then, column (\ref{col1}) becomes
\begin{equation}\label{col2}
\hg_{2s+1}^j + g_{2\ell}^i\left(\al_{2\ell}^{2s+1}\right)_{i,j}.
\end{equation}
 and the resulting determinant is given by
\[
\det( F, \hG_1, G_2 \dots, \hG_{2\ell-1}, G_{2\ell,i}^j, \hG_{2\ell+1}+g_{2\ell}^i\left(\al_{2\ell}^{2\ell+1}\right)_{i}, G_{2\ell+2},\dots, \hG_{2s-1}+g_{2\ell}^i\left(\al_{2\ell}^{2s+1}\right)_{i} G_{2s}).
\]
 If we now expand the determinant using the $i$th column of $G_{2\ell,i}^j$, i.e. (\ref{col2}), we have
\begin{eqnarray*}
D_{2\ell, i}^j &=& \left(\al_{2\ell}^{2s+1}\right)_{i,j} D \\
&+&\det( F, \hG_1, G_2 \dots, \hG_{2\ell-1}, {G}_{2\ell,i}^j, \hG_{2\ell+1}+g_{2\ell}^i\left(\al_{2\ell}^{2\ell+1}\right)_{i}, G_{2\ell+2},\dots, \hG_{2s-1}+g_{2\ell}^i\left(\al_{2\ell}^{2s+1}\right)_{i} G_{2s}),
\end{eqnarray*}
where ${G}_{2\ell,i}^j$ has as $i$th column $\hg_{2s+1}^j$ and $G_{2\ell}$ elsewhere.

Notice that, once more, if we expand this determinant, the terms with no $G_{2\ell}^i$ will be zero since we have more odd columns that are needed to generate the odd subspace (indicated with a hat) given that ${G}_{2\ell,i}^j$ contains one. Thus, the determinant expands as
\[
\sum_{r=\ell+1}^{s-1}\sum_{p=1}^n \left(\al_{2\ell}^{2r+1}\right)_{i,p}\det(F,\hG_1,G_2,\dots, G_{2\ell,i}^j,\hG_{2\ell+1},\dots,\hG_{2r+1}^{p}+g_{2\ell}^i e_p^T, G_{2\ell+1},\dots, G_{2s}),
\]
where $\hG_{2r+1}^{p}$ indicates $\hG_{2r+1}$ with a zero $p$ column. Each one of these terms can transformed by shifting the $p$ column of $\hG_{2r+1}^{p}+g_{2\ell}^i e_p^T$ and the $ith$ column of  $ G_{2\ell,i}^j$. The expansion becomes
\begin{equation}\label{expansion}
\sum_{r=\ell+1}^{s-1}\sum_{p=1}^n \pm \left(\al_{2\ell}^{2r+1}\right)_{i,p}\det(F,\hG_1,G_2,\dots, G_{2\ell},\hG_{2\ell+1},\dots,\hG_{2r+1}^{p}+\hg_{2s+1}^j e_p^T, G_{2\ell+1},\dots, G_{2s}),
\end{equation}
and now we are ready to calculate degrees. Substitute $\hg_{2s+1}^j$ by
\[
\hg_{2s+1}^j = \hg_{2s+1}^j\otimes (\hg_{2r+1}^p)^{-1} \otimes \hg_{2r+1}^p,
\]
as before, and notice that the degree of $\hg_{2s+1}^j\otimes (\hg_{2r+1}^p)^{-1}$ is the same for all of its block-entries and equal to $-1+\frac rs$
\[
d(\hg_{2s+1}^j\otimes (\hg_{2r+1}^p)^{-1}) = d(\hg_{2s+1}^j)- d(\hg_{2r+1}^p)= \begin{pmatrix}\ast\\\frac{1-(s+1)}s\\\ast\\\frac{2-(s+1)}s\\\vdots\\\frac{s-(s+1)}s\\\ast\end{pmatrix} - \begin{pmatrix}\ast\\\frac {1-(r+1)}s\\\ast\\\frac {2-(r+1)}s\\ \vdots\\\frac{s-(r+1)}s\\\ast\end{pmatrix}.
\]
We now need to calculate the degree of $\al_i^j$. This is simple from (\ref{Godd2}) and (\ref{Geven2}). We see that
\[
d(\al_0^{2r+1}) = \frac{s-r}s, \quad \quad d(\al_{2\ell}^{2r+1}) = d(\al_0^{2(r-\ell)+1}) = 1-\frac{r-\ell} s,
\]
\[ d(\al_{2\ell+1}^{2r}) = d(\al_0^{2(r-\ell-1)+1}) = 1-\frac{r-\ell-1}s.
\]
With this, every term in the expansion (\ref{expansion}) has degree
\[
-1+\frac rs + d(\left(\al_{2\ell}^{2r+1}\right)_{i,p}) = -1+\frac rs + 1-\frac{r-\ell}s = \frac \ell s,
\]
and since
\[
d\left(\left(\al_{2\ell}^{2s+1}\right)_{i,j} D\right) = d\left(\left(\al_{2\ell}^{2s+1}\right)_{i,j}\right) = 1-\frac{s-\ell}s = \frac \ell s,
\]
we conclude the proof of the lemma.
\end{proof}

After these results, the proof of theorem \ref{theorem1} is immediate since the entry $(j,i)$ of  $c^r$ is the quotient
\[
\frac{D_{r,i}^j} D,
\]
and so $d(c^r) = d(D_{r,i}^j)$, which coincide with the statement of the theorem.

Finally, the following theorem is a consequence of \ref{theorem1}.
\begin{theorem} If $N$ and $m$ are coprime, the map $T$ is invariant under the scaling (\ref{scaling2}).
\end{theorem}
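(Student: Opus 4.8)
The plan is to deduce the statement from Theorem \ref{theorem1} together with a single additional ingredient: that the normalization matrices $\la_k$ are invariant under (\ref{scaling2}). Recall that the genuine map on the moduli space factors as $T(V_k) = \Tb(V_k)\la_k = \rho_k\rb_k\la_k$, so that the image invariants are obtained from the unnormalized ones $c_k^i = \Tb(a_k^i)$ by the change-of-lift rule (\ref{bnorm}), namely
\begin{equation*}
T(a_k^i) = \la_{k+i}^{-1}\, c_k^i\, \la_{k+m}.
\end{equation*}
Saying that $T$ is invariant under the scaling is equivalent to saying that each $T(a_k^i)$ is homogeneous of the same degree as $a_k^i$. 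Since Theorem \ref{theorem1} already gives $d(c_k^i)=d(a_k^i)$, once I show $d(\la_k)=0$ for all $k$ the displayed identity yields $d(T(a_k^i)) = d(c_k^i) = d(a_k^i)$, and the theorem follows.

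To prove $d(\la_k)=0$ I would argue exactly as in the even-dimensional lemma, writing $\la_k = d_k q_k$ and treating the two factors separately. The factor $d_k$ normalizes $B_k = \Pi_m(c_k^0)$ to Jordan form; since $d(c_k^0)=0$ by Theorem \ref{theorem1}, the matrix $B_k$ is invariant and hence so is $d_k$. For the diagonal factor $q_k$, the entries of the product $b^k$ built from $c_k^{m-1}=c_k^{2s}$ all scale equally, because $d(c_k^{2s})=1$ and the scaling multiplies that block by a single power of $\mu$; therefore the syzygy equations (\ref{uno})--(\ref{dos}) force every ratio $q_k^{i+1}/q_k^i$ to be invariant, so the $q_k^i$ are homogeneous of one common degree. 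Consequently $\la_k$ is homogeneous with $d(\la_k) = d(\delta_k)/n$, where $\delta_k = \det\la_k$, and it only remains to show $d(\delta_k)=0$.

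The degree of $\delta_k$ is pinned down by the determinant identity coming from $N_k\Lambda_k = \rho_k^{-1}T(\rho_k)$ in (\ref{Laxfree}): since $\det\rho_k = \det T(\rho_k) = 1$ by (\ref{norm1}), we obtain
\begin{equation*}
\det(N_k)\,\delta_k\delta_{k+1}\cdots\delta_{k+m-1} = 1 \qquad\text{for all } k.
\end{equation*}
The decisive difference with the even-dimensional case is that here $\det N_k = D$ is \emph{invariant} under (\ref{scaling2}); this is precisely the first of the three lemmas above, $d(D)=0$. Hence applying the scaling leaves the identity unchanged. As in \cite{MB2}, this multiplicative system determines the $\delta_k$ uniquely whenever $N$ and $m$ are coprime, and since the invariant assignment $\hat\delta_{k+r}=\delta_{k+r}$ manifestly solves the scaled system, uniqueness forces $\delta_k$ to be invariant, i.e. $d(\delta_k)=0$. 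Therefore $d(\la_k)=0$, completing the argument.

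The main obstacle, just as in the even-dimensional case, lies in the bookkeeping of the second paragraph: verifying that the generic normalization really does render every $q_k^i$ homogeneous of a single common degree and that this degree propagates correctly through the syzygy equations to $\delta_k$. The determinant step itself is short once $d(D)=0$ is available, and the coprimality hypothesis enters in exactly one place, namely guaranteeing uniqueness of the solution $\{\delta_k\}$ of the multiplicative system, which is what upgrades ``the product is invariant'' to ``each factor is invariant.''
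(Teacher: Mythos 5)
Your proposal is correct and follows essentially the same route as the paper's proof: reduce everything to showing $d(\la_k)=0$, split $\la_k = d_k q_k$, get invariance of $d_k$ from $d(c_k^0)=0$, get invariance of $q_k$ from the homogeneity of the syzygy equations (\ref{uno})--(\ref{dos}), and pin down $\delta_k$ via the identity $\det N_k\,\delta_k\delta_{k+1}\cdots\delta_{k+m-1}=1$ together with $d(D)=0$ and the uniqueness of the solution when $N$ and $m$ are coprime. The only (harmless) cosmetic difference is that you conclude via the change-of-lift formula $T(a_k^i)=\la_{k+i}^{-1}c_k^i\la_{k+m}$ from (\ref{bnorm}), where the paper instead appeals to re-running the determinant computations of theorem \ref{lambdath} with the now-invariant $\la_k$ inserted.
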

\begin{proof}
To prove this result we simply need to prove that $\la_k$ are invariant under scaling. After this fact is proved, computations similar to those in the proof of theorem \ref{lambdath} will show that even as we introduce $\la_k$ in the different block columns of determinants $D$ and $D_{\ell,i}^j$, they do alter neither the homogeneity nor the degree of the determinants because they are invariant under the scaling and all columns of a block-column have the same degree (and so do their linear combinations). Therefore
\[
d(c_k^r) = d(T(a_k^r)) = d(a_k^r),
\]
for any $k$ and $r=0,\dots,m-1$. We will avoid further details of those computations since they are almost identical to those in \ref{lambdath} and the interested reader can reproduce them.

To show that $\la_k$ are invariant under scaling is not hard: if $N$ and $m$ are coprime, their determinants, $\delta_k = \det \la_k$, are the unique solution of
\[
\det N_k \delta_k\delta_{k+1}\dots\delta_{k+m-1} = 1,
\]
and since $\det N_k = D_k$, $\det N_k$ is invariant under the scaling (since $d(D_k) = 0$) and so are $\delta_k$ for any $k$. We next look at each factor in the splitting of $\la_k = d_k q_k$. As in the previous case, the factors $d_k$ are determined by the normalization of
$B_k = \Pi_m(c_k^0)$ as in (\ref{pi}), and since $c_k^0$ are invariant under the scaling, so will $d_k$.

Finally, $q_k$ are uniquely determined by equations of the form (\ref{uno}) and (\ref{dos}) for $c_k^{m-1}$, and by their determinants. Since $c_{i,j}^k = c^{m-1}_{k-m+1} c^{m-1}_{k-m+1}	\dots c^{m-1}_{N+k-m}$ is homogeneous, equations (\ref{uno})-(\ref{dos}) are scaling invariant. Furthermore, $\det q_k = \det d_k^{-1} \delta_k$ and both $\delta_k$ and $\det d_k$ are  scaling invariant, so is $\det q_k$. Therefore, $q_k$ are scaling invariant and so are $\la_k$.
\end{proof}

As in the even dimensional case, this theorem allows us to define the Lax representation for the map $T: \Pm_N\to \Pm_N$. Indeed, if we define
\[
Q_k(\mu) = \begin{pmatrix} O_n&O_n&O_n&\dots&O_n&O_n&a_k^0\\ I_n&O_n&O_n&\dots&O_n&O_n&\mu^{-1} a_k^1\\O_n&I_n&O_n&\dots&O_n&O_n&\mu^{\frac1s}a_k^2\\O_n&O_n&I_n&\dots&O_n&O_n&\mu^{-1+\frac 1s} a_k^3 \\\vdots&\ddots&\ddots&\ddots&\vdots&\vdots&\vdots\\ O_n&O_n&\dots&O_n&I_n&O_n&\mu^{-1+\frac{s-1}s} a^{2s-1}_k\\O_n&O_n&\dots&O_n&O_n&I_n&\mu a^{2s}_k\end{pmatrix},
\]
then we can define
\[
N_k(\mu) = (\rb_k(\mu), R_k(\mu) \rb_{k+1}, R_{k+1}(\mu)\rb_{k+2}(\mu), \dots, R_{k+m-2}(\mu)\rb_{k+m-1}(\mu)),
\]
where
\[
\rb_k(\mu) = \begin{pmatrix} a_r^0\\ O_n\\ \mu^{\frac 1s} a_r^2\\ O_n\\ \mu^{\frac 2s} a_r^4\\ \vdots\\ O_n\\ \mu a_k^s\end{pmatrix},
\]
and $R_{k+r}(\mu) = Q_{k}(\mu)Q_{k+1}(\mu)\dots Q_{k+r}(\mu)$.
The compatibility condition of the system
\[
\rho_{k+1} = \rho_kQ_k(\mu), \quad\quad T(\rho_k) = \rho_k N_k(\mu),
\]
will be given by
\begin{equation}\label{Lax2}
T(Q_k(\mu)) = N_k(\mu)^{-1} \Lambda_k^{-1} Q_k(\mu) \Lambda_n N_k(\mu).
\end{equation}
 Equation (\ref{Lax2}) must be independent from $\mu$ since $T$ is defined by its last column, and it is preserved by the scaling, while the rest of the entries are zero or $I_n$, and hence independent from $\mu$. Hence, this system is a standard {\it Lax representation for the Pentagram map on Grassmannian} in the case $m = 2s+1$, which can also be used as usual to generate invariants of the map. With this new scaling, theorem \ref{Riemann} is also true in the case when $m$ is odd.

\vskip 5ex
{\bf Acknowledgements}: This paper is supported by Mar\'\i~Beffa's NSF grant DMS \#1405722, by Felipe's CONACYT grant \#222870, and by the hospitality of the University of Wisconsin-Madison during Felipe's sabbatical year. R. Felipe was also supported by the {\it Sistema de ayudas para a\~nos sab\'aticos en el extranjero, CONACYT primera convocatoria 2014}.

\end{document}